\renewcommand\footnotetextcopyrightpermission[1]{} % removes footnote with conference information in first column
\newcommand{\oset}[3][0ex]{%
  \mathrel{\mathop{#3}\limits^{
    \vbox to#1{\kern-2\ex@
    \hbox{$\scriptstyle#2$}\vss}}}}
\DeclareMathOperator{\eventually}{\mathbf{F}}
\DeclareMathOperator{\globally}{\mathbf{G}}
\DeclareMathOperator{\nextx}{\mathbf{X}}
\newcommand*{\counting}{\mathop{\mathbf{C}}}
\newcommand*{\until}{\mathbin{\mathbf{U}}}
\newcommand*{\release}{\mathbin{\mathbf{R}}}
\newcommand*{\ltl}{\textup{\textmd{\textsf{LTL}}}}
\newcommand*{\cemitl}{\textup{\textmd{\textsf{CEMITL}}}}
\newcommand*{\emitl}{\textup{\textmd{\textsf{EMITL}}}}
\newcommand*{\midl}{\textup{\textmd{\textsf{MIDL}}}}
\newcommand*{\ctmitl}{\textup{\textmd{\textsf{CTMITL}}}}
\newcommand*{\mitl}{\textup{\textmd{\textsf{MITL}}}}
\newcommand*{\eecl}{\textup{\textmd{\textsf{EECL}}}}
\newcommand*{\ecl}{\textup{\textmd{\textsf{ECL}}}}
\newcommand*{\mtl}{\textup{\textmd{\textsf{MTL}}}}
\newcommand*{\foone}{\textup{\textmd{\textsf{FO[$<, +1$]}}}}
\newcommand*{\fo}{\textup{\textmd{\textsf{FO[$<$]}}}}
\newcommand*{\tre}{\textup{\textmd{\textsf{TRE}}}}
\newcommand*{\nfa}{\textup{\textmd{\textsf{NFA}}}}
\newcommand*{\dfa}{\textup{\textmd{\textsf{DFA}}}}
\newcommand*{\ta}{\textup{\textmd{\textsf{TA}}}}
\newcommand*{\dta}{\textup{\textmd{\textsf{DTA}}}}
\newcommand*{\ocata}{\textup{\textmd{\textsf{OCATA}}}}
\newcommand*\A{\mathcal A} 
\newcommand*\B{\mathcal B} 
\newcommand*\C{\mathcal C} 
\newcommand*\D{\mathcal D} 
\newcommand*\E{\mathcal E}
\newcommand*\F{\mathcal F}
\newcommand*\transitions{\Delta}
\newcommand*\atransitions{\delta}
\newcommand*\Guards{\mathcal G}
\newcommand*\R{\mathbb R}
\newcommand*\N{\mathbb N}
\newcommand*\sem[1]{\ensuremath{\llbracket#1\rrbracket}}
\newcommand*\Lang{\mathcal L}
\renewcommand*\phi{\varphi}
\newcommand*\AP{\textup{\textmd{\textsf{AP}}}}
\newcommand*\proj{\textup{\textmd{\textsf{proj}}}}
\newcommand*{\examplename}{Ex.}
\renewcommand*{\figurename}{Fig.}
\tikzstyle{every state}=[minimum size=1.5em]
\tikzset{>=latex,auto,node distance=2.5cm, every
  loop/.style={looseness=6}, initial text={}, inner sep=0.5mm,
  loopright/.style={loop,looseness=6,out=35, in=-35},
  loopleft/.style={loop,looseness=6,out=145, in=215},
  loopabove/.style={loop,looseness=6,out=125, in=55},
  loopbelow/.style={loop,looseness=6,out=-125, in=-55}, }
\begin{document}
\title{Revisiting Timed Logics with Automata Modalities}

\author{Hsi-Ming Ho}
\orcid{0000-0003-0387-4857}
\affiliation{%
  \institution{University of Cambridge}
  \country{United Kingdom}
}
\email{hsi-ming.ho@cl.cam.ac.uk}

% The default list of authors is too long for headers.
%\renewcommand{\shortauthors}{B. Trovato et al.}

\begin{abstract}
It is well known that (timed) $\omega$-regular properties such as `$p$ holds at every even position'
	and `$p$ occurs at least three times within the next 10 time units'
	cannot be expressed in Metric Interval Temporal Logic (\mitl{})
	and Event Clock Logic (\ecl{}).
	A standard remedy to this deficiency is to extend these
	with modalities defined in terms of automata. %In the timed case, however,
%	this extension itself appears to induce a doubly exponential blow-up.
	In this paper, we show that the logics $\emitl_{0, \infty}$  
	(adding \emph{non-deterministic finite automata} modalities
	into the fragment of \mitl{} with only lower- and upper-bound constraints)
	and \eecl{} (adding automata modalities into \ecl{})
	are already as expressive as
	\emitl{} (full \mitl{} with automata modalities).
	In particular, the satisfiability and model-checking problems for
	$\emitl_{0, \infty}$ and \eecl{}
	are $\mathrm{PSPACE}$-complete, whereas the same problems for $\emitl$
	are $\mathrm{EXPSPACE}$-complete.
	We also provide a simple translation from $\emitl_{0, \infty}$
	to \emph{diagonal-free} timed automata, which
	enables practical satisfiability and model checking based on off-the-shelf tools. %matching the lower bound inherited from \mitl{}.
\end{abstract}

%
% The code below should be generated by the tool at
% http://dl.acm.org/ccs.cfm
% Please copy and paste the code instead of the example below. 
%
\begin{CCSXML}
<ccs2012>
<concept>
<concept_id>10003752.10003753.10003765</concept_id>
<concept_desc>Theory of computation~Timed and hybrid models</concept_desc>
<concept_significance>500</concept_significance>
</concept>
<concept>
<concept_id>10003752.10003790.10002990</concept_id>
<concept_desc>Theory of computation~Logic and verification</concept_desc>
<concept_significance>500</concept_significance>
</concept>
<concept>
<concept_id>10003752.10003790.10011192</concept_id>
<concept_desc>Theory of computation~Verification by model checking</concept_desc>
<concept_significance>500</concept_significance>
</concept>
</ccs2012>
\end{CCSXML}

\ccsdesc[500]{Theory of computation~Timed and hybrid models}
\ccsdesc[500]{Theory of computation~Logic and verification}
\ccsdesc[500]{Theory of computation~Verification by model checking}

\keywords{metric interval temporal logic, timed automata, model checking}

\maketitle

\section{Introduction} 

%One of the most popular specification formalisms for reactive systems
%is \emph{Linear Temporal Logic} (\ltl{}), first introduced into computer science
%by Pnueli~\cite{Pnueli1977} in the late 1970s.
%Apart from its appealing intuitive syntax, the success
%of \ltl{} can be attributed to two important facts.
%First, while \ltl{} is a fairly restricted syntactic fragment of the
%classical \emph{Monadic First-Order Logic of Order} (\fo{}), it is already \emph{expressively complete}
%for \fo{},
%i.e. no expressiveness is sacrificed
%when one opts to write specifications in \ltl{} instead of \fo{}. 
%Second, the satisfiability problem for \ltl{} is $\mathrm{PSPACE}$-complete,
% whereas for \fo{} the problem is non-elementary.
%In particular, \ltl{} enjoys much simpler (both conceptually and computationally)
%translations into B\"{u}chi automata and excellent tool support
%(e.g.,~\cite{Holzmann1997, Cimatti2002, Duret-Lutz2016}).
%%\todo{perhaps get rid of the first paragraph.}

\paragraph{Timed logics} In the context of real-time systems verification, it is natural and desirable to add \emph{timing constraints} to
\emph{Linear Temporal Logic} (\ltl{})~\cite{Pnueli1977}
to enable reasoning about timing behaviours of such systems.
For instance, one may write $\phi_1 \until_I \phi_2$ to assert
that $\phi_1$ holds until a `witness' point where $\phi_2$ holds,
and the time difference between now and that point lies within the
\emph{constraining interval} $I$.
The resulting logic, \emph{Metric Temporal Logic} (\mtl{})~\cite{Koy90},
can be seen as a fragment of \emph{Monadic First-Order Logic of Order and Metric} (\foone{})~\cite{AluHen93}, the timed counterpart of the classical \emph{Monadic First-Order Logic of Order} (\fo{}).
There are, nonetheless, some loose ends in this analogy. For instance, while \ltl{} is as expressive as \fo{}~\cite{Kamp1968, Gabbay1980}, it is noted early on 
that certain `non-local' timing properties in \foone{}, albeit being very simple,
cannot be expressed in timed temporal logics like \mtl{}~\cite{AluHen94}. 
As a concrete example, the property `every $p$-event is followed by a $q$-event and, later, an $r$-event within the next 10 time units', written as the \foone{}  formula
\begin{equation}\label{eq:pqr}
\forall x \, (p(x) \Rightarrow \exists y \, \Big(q(y) \land \exists z \, \big(r(z) \land x \leq y \leq z \leq x + 10)\big)\Big)
\end{equation}
is not expressible in \mtl{}---indeed, no `finitary' extension of \mtl{} can be \emph{expressively complete}
for \foone{}~\cite{Rabinovich2007}.\footnote{(\ref{eq:pqr}) can, however, be expressed in \mtl{} if the  \emph{continuous} semantics of the logic is adopted
or past modalities are allowed; see~\cite{Bouyer2010} for details.}
A more serious practical concern 
is that the satisfiability problem for \mtl{} is undecidable~\cite{AluHen93, Ouaknine2006}. For this reason, research efforts have been 
focused on fragments of \mtl{} with decidable satisfiability, most notably
\emph{Metric Interval Temporal Logic} (\mitl{}), the fragment of \mtl{} in which
`punctual' constraining intervals are not allowed~\cite{AluFed96}.
In particular, \mitl{} formulae can be effectively translated into \emph{timed automata} (\ta{s})~\cite{AluDil94},
giving practical $\mathrm{EXPSPACE}$ decision procedures for its \emph{satisfiability} and \emph{model-checking} problems~\cite{BriEst14,
BriGee17, BriGee17b}. 

\paragraph{Automata modalities}
It is well known that properties that are necessarily \emph{second order} (e.g.,~`$p$ holds at all even positions') cannot be expressed in \ltl{} or \mitl{}.
Fortunately, it is possible to add \emph{automata modalities} into
\ltl{} %(thus obtaining \emph{Extended Temporal Logic}, abbreviated \etl{})
at no additional computational cost~\cite{Wolper1994, Sistla1985}.
In timed settings, the logic obtained from \mitl{} by adding
\emph{time-constrained} automata modalities defined by non-deterministic finite automata (\nfa{s})
is called \emph{Extended Metric Interval Temporal Logic} (\emitl{})~\cite{Wilke1994}.
From a theoretical point of view, \emitl{} is a \emph{fully decidable}
formalism (i.e.~constructively closed under all Boolean operations and with decidable satisfiability~\cite{HenRas98})
whose class of timed languages strictly contains
that of \mitl{} and B\"uchi automata.\footnote{A very recent paper of Krishna, Madnani, and Pandya~\cite{KriMad18} showed
that this class admits some alternative characterisations (namely, a syntactic fragment of \ocata{s} and
a timed monadic second-order logic).}
In practice, it can be argued that automata modalities are natural, easy-to-use
extensions of the usual \mitl{} modalities. %
%indeed, $\varphi_1 \until_I \varphi_2$ 
%can be seen as a syntactic sugar for $\mathcal{A}_I(\varphi_1, \varphi_2)$
%where $\mathcal{A}$ is the \nfa{} in \figurename~\ref{fig:until}.
%Namely, in \emitl{} modalities are of the form $\mathcal{A}_I(\varphi_1, \dots, \varphi_n)$
%where $\mathcal{A}$ is an \nfa{},
%$I$ is a non-singular interval, and $\varphi_1$, \dots, $\varphi_n$ are \emitl{} formulae
%used as the `letters' of $\A$.
 %They also allow more properties in \foone{} to be expressed: (\ref{eq:pqr}) can be written in \emitl{}
%as
%\[
%\globally \big(p \Rightarrow \mathcal{A}_{[0, 10]}(\top, q, r, q \land r)\big)
%\]
%where $\mathcal{A}$ is the \nfa{} in \figurename~\ref{fig:pqr}.
%(we omit the labels when they are $\top$'s).
They also allow  properties like (\ref{eq:pqr}), which often
emerge in application domains like healthcare and automotive engineering,
to be written as specifications.
\begin{example}[\cite{Abbas17}]\label{ex:icd}
Discrimination algorithms are implemented in implantable cardioverter defibrillators (ICDs)
to detect potentially dangerous heartbeat patterns. As a simple example, one may want to check
whether \emph{the number of heartbeats in one minute is between $120$ and $150$}. This can be expressed
as the \ctmitl{}~\cite{KriMad16} formula $\counting_{[0, 59]}^{\geq 120} p \wedge \counting_{[0, 59]}^{\leq 150} p$
where $p$ denotes a peak in the cardiac signal. The \emph{counting modalities}
$\counting^{\sim k}_{I}$ (where $0 \in I$, which is the case here),
as well as~(\ref{eq:pqr}), be expressed straightforwardly in terms of automata.
\end{example}
\begin{example}[adapted from~\cite{OpenScenario16}]\label{ex:overtaking}
In autonomous driving, one may want to specify that 
\emph{a car overtaking another from the left must be done in $10$ seconds}.
Suppose the lane on the left is empty and the events are sampled sufficiently frequently (say $5$ms), this can be expressed as 
the \emitl{} formula $\A_{[0, 10]} (\texttt{TTC > 4}, \dots)$
(see \figurename~\ref{fig:vehicles} and \figurename~\ref{fig:overtaking})
where $\texttt{TTC}$ is the time to collision, 
$\texttt{dist}$ is the longitudinal distance between the two vehicles, and 
$\texttt{to\_left}$, $\texttt{to\_right}$ are the actions for merging to the left/right lane---these are taken immediately after $\texttt{TTC <= 4}$
and $\texttt{dist >= 5}$, respectively.

\end{example}

 %, and it is unclear (at least to the author) whether
%their approach extends to model checking.

%In the \emph{pointwise} semantics, however, the situation is much less satisfactory.\footnote{For a comparison between the continuous and pointwise
%semantics of \mtl{}, see~\cite{Ouaknine2008}.}
%Wilke~\cite{Wilke1994} showed in 1994 that
%full \emitl{} can be translated into \ta{s}, but the proposed
%algorithm is non-elementary.

\begin{figure}[h]
\centering
	\begin{tikzpicture}
\node[inner sep=0pt] (redcar) at (0,0)
    {\includegraphics[width=.1\textwidth]{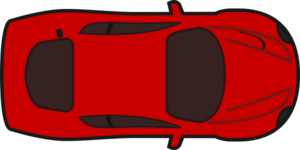}};
\node[inner sep=0pt] (bluecar) at (3,0)
    {\includegraphics[width=.1\textwidth]{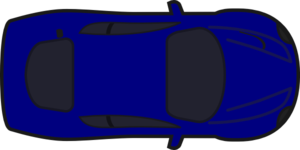}};
\node[inner sep=0pt] (bluecar) at (8,0) {};

\draw[dashed,->, very thick] (1,0) to[curve through={(1.1,0) (1.5,0.9) (2,1) (2.5,1) (3,1) (3.5,1) (4,1) (4.5, 1) (5,1) (5.5,1) (6,0.9) (6.4,-.1) (6.5, -.1)}] (7,-.1);

\draw[|<->|][dotted] (1,-.2) -- (2,-.2) node[midway,below=0.2cm] {\scriptsize $\texttt{TTC == 4}$};

\draw[|<->|][dotted] (4,.6) -- (6,.6) node[midway,below=0.2cm] {\scriptsize $\texttt{dist == 5}$};

	\end{tikzpicture}
\caption{The red car overtakes the blue car from the left.}
\label{fig:vehicles}
\end{figure}

\begin{figure}[h]
\centering
	\scalebox{.75}{\begin{tikzpicture}[node distance = 4cm]
		\node[initial left ,state] (0) {};
		\node[state, right=2.6cm of 0] (1) {};
		\node[state, right=1.3cm of 1] (2) {};
		\node[state, right=2.6cm of 2] (3) {};
		\node[state, accepting, right=1.3cm of 3] (4) {};
		\path
		(0) edge[loopabove, ->] node[above, align=center]{\scriptsize $\texttt{TTC > 4}$} (0)
		(2) edge[loopabove, ->] node[above, align=center]{\scriptsize $\texttt{dist < 5}$} (2)

		(0) edge[->] node[above] {\scriptsize $\texttt{TTC <= 4}$} (1)

		(2) edge[->] node[above] {\scriptsize $\texttt{dist >= 5}$} (3)
		(3) edge[->] node[above] {\scriptsize $\texttt{to\_right}$} (4)
		(1) edge[->] node[above] {\scriptsize $\texttt{to\_left}$} (2);
	\end{tikzpicture}}
\caption{$\mathcal{A}$ in Example~\ref{ex:overtaking}.}
\label{fig:overtaking}
\end{figure}
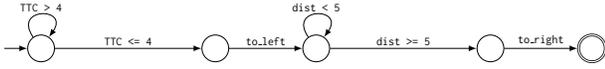

Compared with \ltl{} and \mitl{}, however, translating \emitl{} 
into \ta{s} is considerably more challenging.
The original translation by Wilke~\cite{Wilke1994}
is non-elementary and thus not suitable for practical purposes.
%As a result, there is currently no tool support for \emitl{} satisfiability and model checking. %translations into automata
%Wilke~\cite{Wilke1994} showed that
%\emitl{} formulae can be translated into \ta{s}, but the proposed
%algorithm is non-elementary.
Krishna, Madnani, and Pandya~\cite{KriMad17}
showed that any \emitl{} formula can be encoded into an \mitl{} formula of doubly exponential size (which can then be translated into a \ta{}),
but this does not match the $\mathrm{EXPSPACE}$ lower bound inherited from \mitl{}.
More recently, Ferr\`ere~\cite{Ferrere18} proposed an asymptotically optimal construction from \midl{} (\emph{Metric Interval Dynamic Logic},
which is strictly more expressive and subsumes \emitl{}) formulae
to \ta{s}, but it is very complicated and relies heavily on the use of \emph{diagonal constraints} (i.e.~comparison between clocks) which are, in general, not preferred in practice~\cite{Bouyer03, Bouyer2005, Gastin2018} and not well-supported by existing model checkers.\footnote{It is possible to obtain a diagonal-free \ta{} from an \emitl{} formula
by first applying the construction in~\cite{Ferrere18} and then removing the diagonal constraints~\cite{Bouyer05}. 
This, however, is expensive and difficult to implement.}

\paragraph{Contributions}
%\todo{plural? and undecidability proof?}
We consider a simple fragment of \emitl{}, which we call $\emitl_{0, \infty}$, obtained
by allowing only lower- and upper-bound constraining intervals (e.g.,~$[0, a)$ and $(b, \infty)$)
and \eecl{}~\cite{Raskin1999} (adding automata modalities to \emph{Event Clock Logic} \ecl{}).
The satisfiability and model-checking problems for $\emitl_{0, \infty}$ and \eecl{} are much cheaper
than that of \emitl{} ($\mathrm{PSPACE}$-complete vs $\mathrm{EXPSPACE}$-complete).
Moreover, we show that they are already as expressive as full \emitl{}---this is in sharp contrast with the situation for
`vanilla' $\mitl_{0, \infty}$/$\ecl$ and \mitl{}, where the latter is strictly more expressive
when interpreted over timed words~\cite{HenRas98, Raskin1999}---making them
\emph{expressive} yet \emph{tractable} real-time specification formalisms.
We then show that $\emitl_{0, \infty}$
admits a much simpler translation into \ta{s}. %\footnote{While \afa{} and \nfa{} are equally expressive (and so are
%\amitl{} and \emitl{}), \afa{} can be exponentially more succinct than \nfa{}.} %\footnote{In this work we assume that all constants are encoded succinctly, e.g., in binary.}
%whose components
%have descriptions (i.e. symbolic encodings) of at most singly exponential size.\footnote{We assume that all constants are encoded succinctly, e.g., in binary.}
%\footnote{We assume that all constants are encoded succinctly, e.g., in binary.}
%\footnote{Alternatively,
%it is also possible to extend the construction in~\cite{BriEst14} to obtain 
%$\mathrm{EXPSPACE}$ procedures for \emitl{}; but this will use
%(roughly) $4mn$ clocks ($m = \lceil(\inf (I)/|I|) + 1\rceil$, $n$ is the number of locations of $\A$)
%for each automaton subformula $\A_I(\varphi_1, \dots, \varphi_k)$.}
%Our construction has several advantages over existing approaches:
%Compared to~\cite{Wilke1994, KriMad16, KriMad17}, our approach has the following advantages:
%Compared with existing approaches,
%Besides an exponential improvement in complexity,
%we translate
%formulae into \ta{s} 
%without going through (possibly doubly exponentially larger)
%equisatisfiable formulae.  In particular,
%The construction is formulated in the \emph{compositional}
%framework we proposed earlier for \mitl{} based on \emph{one-clock alternating timed automata} (\ocata{s})~\cite{OuaWor07}.
%More importantly,
Specifically, by effectively decoupling the timing
and operational aspects of automata modalities,
overlapping obligations imposed by a single automaton subformula
can be handled in a purely fifo manner
with a set of \emph{sub-components} (each of which is a simple one-clock \ta{}
with a polynomial-sized symbolic representation),
avoiding the use of diagonal constraints altogether.\footnote{For simplicity
we focus on logics with only future modalities, but our results readily carry over to the versions with both future and past modalities,
thanks to the compositional nature of our construction (cf. e.g.,~\cite{Kesten1998, Nickovic2008}).}
This makes our construction better suited to be implemented
to work with existing highly efficient algorithmic back ends (e.g.,~\textsc{Uppaal}~\cite{Behrmann2006} and
\textsc{LTSmin}~\cite{KanLaa15}).
%and may lead to better performance.
%Along the way, we also obtain that extending \mitl{} with modalities
%defined by \emph{deterministic} \ta{s} leads to undecidability.

%\item For each counting subformula $\counting_I^{n} \varphi$,
%our construction uses $n(m + 1)$ clocks (where $m = \lceil \sup(I) / |I| \rceil$). 
%The best previously known translation into timed automata
%(combining~\cite{KriMad16} with~\cite{BriGee17}) requires
%$2n(m + 1)$ clocks.
%
%\item For each `Until with threshold' subformula $\varphi_1 \untilt_{I}^{\# \varphi_3 \sim n} \varphi_2$, our construction uses $2(m + 1)$ clocks
%(where $m$ is as defined above).
%The construction in~\cite{KriMad16} gives an equisatisfiable
%\mitl{} formula, but this formula (when translated into a
%timed automaton) may use up to $2nm^2$ clocks.
%Moreover, as it defines a set of \emph{oversampled}
%timed words with additional `non-action' points, it is not clear how to
%obtain a timed automaton equivalent to the original formula
%as required for model-checking.

\paragraph{Related work}
The idea of extending \ltl{} to capture the full class
of $\omega$-regular languages dates back to the seminal works of Clarke, Sistla, Vardi, and
Wolper~\cite{Wolper1983, Wolper1994, Sistla1985, Sistla1985b} in the early 1980s.
In particular, it is shown that \ltl{} with \nfa{} modalities---which
essentially underlies various industrial specification languages
like ForSpec~\cite{Armoni02} and PSL~\cite{Eisner06}---are expressively equivalent to B\"uchi automata,
yet the model-checking and satisfiability problems remain
$\mathrm{PSPACE}$-complete, same as \ltl{}.\footnote{There are other ways to extend \ltl{} to achieve
$\omega$-regularity, e.g., adding monadic second-order quantifiers
($\textup{\textmd{\textsf{QPTL}}}$~\cite{Sistla1985}) or
least/greatest fixpoints ($\textup{\textmd{\textsf{$\mu$LTL}}}$~\cite{Banieqbal1987, Vardi1987}).
These formalisms unfortunately suffer from higher complexity or less readable syntax.}
%In the current paper, for the sake of simplicity, we will only focus
%on automata modalities with finite acceptance conditions
Our approach generalises the construction in~\cite{Wolper1994}
in the case of finite acceptance.

Henzinger, Raskin, and Schobbens~\cite{HenRas98, Raskin1999}
proved a number of analogous results in timed settings; in particular, 
they showed that in the \emph{continuous} semantics (i.e.~over finitely variable \emph{signals}),
(i) $\mitl_{0, \infty}$ and \ecl{} are as expressive as \mitl{}, and (ii)
the fragment of $\emitl$ with \emph{unconstrained} automata modalities 
is as expressive as \emph{recursive event-clock automata},
and the verification problems for this fragment can be solved in $\mathrm{EXPSPACE}$.
%In the \emph{pointwise} semantics, however, the situation is much less satisfactory.\footnote{For a comparison between the continuous and pointwise
%semantics of \mtl{}, see~\cite{Ouaknine2008}.}
%Wilke~\cite{Wilke1994} showed in 1994 that
%full \emitl{} can be translated into \ta{s}, but the proposed
%algorithm is non-elementary.
Our results can be seen as counterparts 
in the \emph{pointwise} semantics (i.e.~over \emph{timed words}). 

Besides satisfiability and model checking, extending timed logics with
automata or regular expressions is also a topic of great interest in
\emph{runtime verification}.
Basin, Krsti\'{c}, and Traytel~\cite{Basin2017}
showed that \mtl{} with time-constrained regular-expression modalities admits an efficient runtime monitoring procedure in a pointwise, integer-time setting.
A very recent work of Ni\v{c}kovi\'c, Lebeltel, Maler, Ferr\`ere, and Ulus~\cite{Nickovic2018}
considered a similar extension of \mitl{} with \emph{timed regular expressions}
(\tre{})~\cite{Asarin1997, Asarin2002} in the context of monitoring and analysis of Boolean and real-valued signals. 

\section{Timed logics and automata}

\paragraph{Timed languages}
%Let %$\AP$ be a finite set of atomic propositions
%%and
%$\R_{\geq 0}$ be the set of non-negative real numbers.
A \emph{timed word} over a finite alphabet $\Sigma$ is an infinite sequence
of \emph{events} $(\sigma_i,\tau_i)_{i \geq 1}$ over
$\Sigma \times \R_{\geq 0}$ with $(\tau_i)_{i\geq 1}$ a non-decreasing
sequence of non-negative real numbers such that for each $r \in \R_{\geq 0}$,
there is some $j \geq 1$ with $\tau_j \geq r$ (i.e.~we require all timed words
to be `\emph{non-Zeno}').
We denote by $T\Sigma^\omega$ the set of all timed words over $\Sigma$. A \emph{timed language} is a
subset of $T\Sigma^\omega$.

\paragraph{Extended timed logics} 
A \emph{non-deterministic finite automaton} (\nfa{}) over $\Sigma$
is a tuple $\A = \langle \Sigma, S, s_0,\transitions, F \rangle$
where $S$ is a finite set of locations, $s_0 \in S$ is the initial location,
$\transitions \subseteq S \times \Sigma \times S$ is the transition relation,
and $F$ is the set of final locations.
We say that $\A$ is \emph{deterministic} (a \dfa{}) iff for each $s \in S$
and $\sigma \in \Sigma$, $| \{ (s, \sigma, s') \mid (s, \sigma, s') \in \transitions \} | \leq 1$. 
A \emph{run} of $\A$ on $\sigma_1 \dots \sigma_n \in \Sigma^+$
(without loss of generality, we only consider runs of automata modalities over \emph{nonempty} finite words in this paper) is a
sequence of locations $s_0 s_1 \dots s_n$ where 
there is a transition $(s_i,\sigma_{i+1},s_{i+1}) \in \transitions$
for each $i$, $0 \leq i < n$.  A run of $\A$ is \emph{accepting} iff
it ends in a final location. A finite word is \emph{accepted} by $\A$
iff $\A$ has an accepting run on it.
We denote by $\sem{\A}$ the set of finite words accepted by $\A$.

\emph{Extended Metric Interval Temporal Logic} (\emitl{}) formulae over
a finite set of atomic propositions $\AP$
are generated by  %the syntax of \mitl{} is given
%as follows:
\begin{displaymath}
  \phi := \top \mid p \mid \phi_1 \land \phi_2 \mid \neg\phi \mid \A_I(\phi_1, \dots, \phi_n)
\end{displaymath}
where $p\in\AP$, $\A$ is an \nfa{} over the $n$-ary alphabet $\{ 1, \dots, n \}$,
% (i.e.~each transition of $\A$ is labelled with one of $1$, \dots, $n$),
and $I \subseteq \R_{\geq 0}$ is a non-singular interval with endpoints in $\N_{\geq 0} \cup\{\infty\}$.\footnote{For notational simplicity,
we will occasionally use $\phi_1$,~\dots, $\phi_n$ directly as transition labels (instead of $1$, \dots, $n$).}
As usual, we omit the subscript $I$ when $I = [0, \infty)$ and write pseudo-arithmetic expressions for lower
or upper bounds, e.g.,~`$< 3$' for $[0, 3)$.
We also omit the arguments $\phi_1$,~\dots, $\phi_n$
and simply write $\A_I$, if clear from the context.
%We write $|I|$ for $\sup(I) - \inf(I)$.
Following~\cite{AluHen93, AluHen94, Wilke1994, OuaWor07}, we consider
the pointwise semantics of \emitl{} and interpret
formulae over timed words: given an \emitl{} formula $\phi$ over $\AP$, 
a timed word $\rho=(\sigma_1,\tau_1)(\sigma_2,\tau_2)\dots$ over $\Sigma_\AP = 2^\AP$ and
a \emph{position} $i \geq 1$,
\begin{itemize}
\item $(\rho, i)\models \top$;
\item $(\rho, i)\models p$ iff $p\in \sigma_i$;
\item $(\rho, i)\models \phi_1\land \phi_2$ iff $(\rho,i)\models\phi_1$ and 
$(\rho,i)\models\phi_2$;
\item $(\rho,i)\models\neg\phi$ iff $(\rho,i)\not\models\phi$;
%\item $(\rho,i)\models \A_I(\phi_1, \dots, \phi_n)$ iff there is an
%accepting run of $\A$ on $a_i \dots a_j$ where $a_k \in \{1, \dots, n\}$
%and $(\rho, k) \models \phi_{a_k}$  for each $k$, $i \leq k \leq j$
%and $\tau_j - \tau_i \in I$.
\item $(\rho,i)\models \A_I(\phi_1, \dots, \phi_n)$ iff there exists
  $j\geq i$ such that (i) $\tau_j-\tau_i\in I$ and (ii) there is an
accepting run of $\A$ on $a_i \dots a_j$ where $a_\ell \in \{1, \dots, n\}$
and $(\rho, \ell) \models \phi_{a_\ell}$  for each $\ell$, $i \leq \ell \leq j$.\footnote{Note that
it is possible for $(\rho,i)\models \A_I(\phi_1, \dots, \phi_n)$ and $(\rho,i) \models \A^c_I(\phi_1, \dots, \phi_n)$,
where $\A^c$ is the complement of $\A$, to hold simultaneously.}
\end{itemize}
The other Boolean operators are defined as usual:
$\bot \equiv \neg\top$,
$\phi_1\lor\phi_2 \equiv \neg(\neg\phi_1\land\neg\phi_2)$,
and $\phi_1\Rightarrow \phi_2 \equiv \lnot\phi_1\lor \phi_2$.
We also define the dual automata modalities
$\tilde{\A}_I(\phi_1, \dots, \phi_n) \equiv \neg \A_I(\neg \phi_1, \dots, \neg \phi_n)$.
%To ease the understanding, let us detail the semantics of $\tilde{\A}_I(\phi_1, \dots, \phi_n)$:
%\begin{itemize}
%\item $(\rho,i)\models \tilde{\A}_I(\phi_1, \dots, \phi_n)$ iff for all $j\geq i$
%  such that $\tau_j-\tau_i\in I$ and all finite words $a_i \dots a_j$ over $\{1, \dots, n\}$,
%either (i) there is no accepting run of $\A$ on $a_i \dots a_j$ or (ii)
%  there exists $k$, $i\leq k \leq j$ such that $(\rho,k) \models \phi_{a_k}$.
%\end{itemize}
With the dual automata modalities, we can transform every
\emitl{} formula $\phi$ into \emph{negative normal form}, i.e.~an \emitl{} formula using
only atomic propositions, their negations, and the operators $\lor$,
$\land$, $\A_I$, and $\tilde{A}_I$.
It is easy to see that the standard \mitl{} `until' $\phi_1 \until_I \phi_2$
can be defined in terms of automata modalities.
We also use the usual shortcuts like
$\eventually_I\phi \equiv \top\until_I\phi$, 
$\globally_I\phi \equiv \neg\eventually_I\neg\phi$,
and $\phi_1\release_I \phi_2 \equiv \neg\big((\neg\phi_1)\until_I(\neg \phi_2)\big)$.
We say that $\rho$ \emph{satisfies} $\phi$ (written $\rho\models\phi$)
iff $(\rho,1)\models\phi$, and we write
$\sem\phi$ for the timed language of $\phi$, i.e.~the set of all timed words satisfying $\phi$.
$\emitl_{0, \infty}$ is the fragment of \emitl{} where all constraining
intervals $I$ must be lower or upper bounds (e.g.,~$< 3$ or $\geq 5$).
\emph{Extended Event Clock Logic} (\eecl{}) is the fragment of \emitl{}
where $\A_I$ is replaced by a more restricted `event-clock' counterpart:
\begin{itemize}
\item $(\rho,i)\models \oset{\triangleright}{\A}_I(\phi_1, \dots, \phi_n)$ iff 
(i) there is a \emph{minimal} position $j \geq i$ such that $\A$ has an accepting run on $a_i \dots a_j$ where $a_\ell \in \{1, \dots, n\}$
and $(\rho, \ell) \models \phi_{a_\ell}$  for each $\ell$, $i \leq \ell \leq j$; and
(ii) $j$ satisfies $\tau_j-\tau_i\in I$. 
\end{itemize}
%When writing
%\emitl{} formulae, we omit the subscript `$I$' if $I = [0, \infty)$.
%\ltl{} is the
%fragment of \mitl{} where all operators are labelled by $[0,\infty)$;
%and $\mitl_{0,\infty}$ is the fragment where, in all intervals, either
%the left endpoint is $0$ or the right endpoint is $+\infty$.

\paragraph{Timed automata}
Let $X$ be a finite set of \emph{clocks}
($\R_{\geq 0}$-valued variables).
A \emph{valuation} $v$ for $X$ maps each clock $x \in X$ to a value in $\R_{\geq 0}$.
We denote by $\mathbf{0}$ the valuation that maps every clock to $0$,
and we write the valuation simply as a value in $\R_{\geq 0}$
when $X$ is a singleton.
The set $\Guards(X)$ of \emph{clock constraints} $g$ over $X$ is generated
by $g:= \top\mid g\land g \mid x\bowtie c$ where
${\bowtie}\in \{{\leq},{<},{\geq},{>}\}$, $x\in X$, and $c\in\N_{\geq 0}$.
The satisfaction of a clock constraint $g$ by a valuation $v$ (written $v \models g$) is
defined in the usual way, and we write $\sem{g}$ for the set of valuations $v$ satisfying $g$.
%We say that $g'$ is \emph{weaker} than $g$ iff $\sem{g} \subseteq \sem{g'}$
%(in particular, this is the case when $g'$ appears as a conjunct of $g$).
%For our purpose, we additionally require each $g \in \Guards(X)$ to be \emph{succinct},
%i.e.~the number of occurrences of each clock in $g$ is the minimal possible (in the equivalence
%class induced by $\sem{{\cdot}}$ and $\subseteq$).
For $t\in\R_{\geq 0}$, we let $v +t$
be the valuation defined by $(v +t)(x) = v (x)+t$ for all $x\in
X$. For $\lambda \subseteq X$, we let $v [\lambda \leftarrow 0]$ be the valuation
defined by $(v[\lambda \leftarrow 0])(x) = 0$ if $x\in \lambda$, and
$(v[\lambda \leftarrow 0])(x) = v (x)$ otherwise.

%We introduce the notion of \emph{generalised B\"uchi timed automaton}
%(\gbta) as an extension of classical timed automata~\cite{AluDil94}
%with a generalised acceptance condition (used by~\cite{GerPel95} in the untimed setting).
A \emph{timed automaton} (\ta{}) over $\Sigma$ is a tuple
$\A = \langle \Sigma, S, s_0, X, \transitions, \F \rangle$ where $S$ is a finite set of
locations, $s_0 \in S$ is the initial location, $X$ is a finite set of clocks,
$\transitions \subseteq S \times \Sigma \times \Guards(X) \times
2^X \times S$ is the transition relation,
% \footnote{We consider transitions labelled by subsets of letters, in
% order to write succinctly bunches of resembling transitions as only
% one: for instance, this allows us to guard a transition with all the
% letters that contain atomic proposition $p\in \AP$ (see
% Example~\ref{ex:OCATA}).},
and $\F=\{F_1,\dots,F_n\}$, with $F_i\subseteq S$ for all $i$, $1\leq i \leq n$,
is the set of sets of final locations.\footnote{We adopt 
\emph{generalised B\"uchi} acceptance  for technical convenience; 
indeed, any \ta{} with a generalised B\"uchi acceptance condition can be converted into a
classical B\"uchi \ta{} via a simple standard construction~\cite{Courcoubetis1992}.}
We say that $\A$ is \emph{deterministic} (a \dta{}) iff for each $s \in S$ and $\sigma \in \Sigma$ and
every pair of transitions $(s, \sigma, g^1, \lambda^1, s^1) \in \transitions$ and $(s, \sigma, g^2, \lambda^2, s^2) \in \transitions$, $g^1 \land g^2$ is not satisfiable.
A \emph{state} of $\A$ is a pair $(s, v)$
of a location $s \in S$ and a valuation $v$ for $X$.
A \emph{run} of $\A$ on a timed word
$(\sigma_1,\tau_1)(\sigma_2,\tau_2)\cdots\in T\Sigma^\omega$ is a
sequence of states $(s_0,v_0)(s_1,v_1)\dots$ where (i)
$v_0=\mathbf{0}$ and (ii) for each $i\geq 0$,
there is a transition $(s_i,\sigma_{i+1},g,\lambda,s_{i+1})$
such that %$L_{i+1} = (\ell',v ')$, % $\sigma_{i+1}\in \alpha$,
$v_i +(\tau_{i+1}-\tau_i)\models g$ (let $\tau_0=0$) and
$v_{i+1} =(v_i +(\tau_{i+1}-\tau_i))[\lambda \leftarrow 0]$. 
A run of $\A$ is \emph{accepting} iff
the set of locations it visits infinitely often contains at least
one location from each $F_i$, $1\leq i\leq n$. A timed word
is \emph{accepted} by $\A$ iff $\A$ has an accepting run on it.
We denote by $\sem{\A}$ the timed language accepted by $\A$.
For two \ta{s} $\A^1 = \langle \Sigma,S^1,s_0^1,X^1,\transitions^1,\F^1 \rangle$ and
$\A^2 = \langle \Sigma,S^2,s_0^2,X^2,\transitions^2,\F^2 \rangle$ over a
common alphabet $\Sigma$, the (synchronous) product 
$\A^1 \times \A^2$ is defined as the \ta{} $\langle \Sigma,S,s_0,X,\transitions,\F \rangle$
where (i) $S=S^1 \times S^2$, $s_0 = (s_0^1, s_0^2)$, and $X = X_1 \cup X_2$;
(ii) $((s^1_1,s^2_1),\sigma,g,\lambda,(s^1_2,s^2_2))\in\transitions$
iff there exists $(s^1_1,\sigma,g^1,\lambda^1,s^1_2)\in\transitions^1$
and $(s^2_1,\sigma,g^2,\lambda^2,s^2_2)\in\transitions^2$ such that
$g=g^1\land g^2$ and $\lambda = \lambda^1 \cup \lambda^2$;
and (iii) let $\F^1=\{F_1^1,\dots,F_n^1\}$, $\F^2=\{F_1^2,\dots,F_m^2\}$, then
$\F = \{F_1^1 \times S^2,\dots,F_n^1\times S^2, S^1\times
F_1^2,\dots,S^1\times F_{m}^2\}$. Note in particular that we have $\sem{\A^1 \times \A^2} = \sem{\A^1} \cap \sem{\A^2}$.

\begin{example}
  \begin{figure}[h]
    \centering \scalebox{.75}{\begin{tikzpicture}[node distance =
        2.5cm] \node[initial left,state](0){};
        \node[state, right of=0](1){};
        \node[state, right of=1, accepting](2){};
%        \node[circle, minimum size = 0.5em, inner sep = 0mm, fill,
%        below of=0](3){};
        \path
        (0) edge[loopabove,->] (0)
        (0) edge[->] node[above, align=center]{$p \wedge \neg q$ \\ $x := 0$} (1)
%        (3) edge[->,bend left] node[below] {} (0)
%        (3) edge[->] node[above] {} (2.5, -2.5)
%        (1) edge[->,bend left] node[below] {} (0)
        (1) edge[loopabove,->] node[above, align=center]{$\neg q$ \\ $x \leq 1$} (1)
        (1) edge[->] node[above] {$x > 1$} (2);
      \end{tikzpicture}}
    \caption{A \ta{} accepting 
      $\sem{\neg \globally(p\Rightarrow \eventually_{\leq 1}q)}$.}
    \label{fig:TA}
  \end{figure}
Consider the \ta{} over $\Sigma_{\{p,q\}}$ in \figurename~\ref{fig:TA}
(following the usual convention, we omit transition labels when they are $\top$'s and use Boolean formulae over atomic propositions to represent letters,
e.g., here $p \land \neg q$ stands for $\{ \sigma \in \Sigma_{\{p,q\}} \mid p \in \sigma, q \notin \sigma \}$). It non-deterministically pick an event where $p$ holds but $q$ does not hold (thus
$\eventually_{\leq }q$ is not fulfilled immediately) and enforces that $q$ does not hold
in the next time unit. In other words, it accepts $\sem{\eventually\big(p \wedge \globally_{\leq 1}(\neg q)\big)} = \sem{\neg \globally(p\Rightarrow \eventually_{\leq 1}q)}$.
\end{example}

\paragraph{Alternation} One-clock
alternating timed automata (\ocata{s}) extend
one-clock timed automata with the power of
\emph{universal choice}. Intuitively, a transition of an \ocata{}
may spawn several copies of the automaton that run in parallel from the
targets of the transition; a timed word is accepted iff
\emph{all} copies accept it.
Formally, for a set $S$ of
locations, let $\Gamma(S)$ be the set of formulae defined by
%\label{def:gamma}
\[
  \gamma := \top\mid \bot\mid \gamma_1 \lor \gamma_2\mid \gamma_1 \land \gamma_2
  \mid s \mid x\bowtie c\mid x.\gamma
\] 
where $x$ is the single clock, $c\in\N_{\geq 0}$, ${\bowtie} \in\{{\leq},{<},{\geq},{>}\}$, and
$s \in S$ (the construct $x.$ means ``reset $x$''). For a formula $\gamma \in \Gamma(S)$, let its dual $\overline{\gamma} \in \Gamma(S)$
be the formula obtained by applying
\begin{itemize}
\item $\overline{\top} = \bot$; $\overline{\bot} = \top$;
\item $\overline{\gamma_1 \lor \gamma_2} = \overline{\gamma_1} \land \overline{\gamma_2}$;
		$\overline{\gamma_1 \land \gamma_2} = \overline{\gamma_1} \lor \overline{\gamma_2}$;
\item $\overline{s} = s$; $\overline{x \bowtie c} = \neg (x \bowtie c)$; $\overline{x.\gamma} = x.\overline{\gamma}$.
\end{itemize}
An \ocata{} over $\Sigma$ is a tuple $\A=\langle \Sigma, S, s_0, \atransitions, F \rangle$
where $S$ is a finite set of locations, 
$s_0 \in S$ is the initial location,
$\atransitions \colon S\times \Sigma \to \Gamma(S)$ is the transition
function, and $F \subseteq S$ is the set of final locations.
%We call arc of the \ocata{} $\A$ a triple $(\ell,\sigma, \bigwedge_k A_{j,k})$
%such that $\atransitions(\ell,\sigma)=\bigvee_j\bigwedge_k A_{j,k}$.
%
%A \emph{run} of an \ocata
%is a directed acyclic graph (DAG) labelled
%with states that describe current locations and clock valuations. More
%precisely,
A \emph{state} of $\A$ is a pair $(s,v)$ of a location $s \in S$ and
a valuation $v$ for the single clock $x$.
Given a set of states $M$, a formula $\gamma \in \Gamma(S)$
and a clock valuation $v$, we define
\begin{itemize}
\item $M\models_v \top$; $M\models_v \ell$ iff $(\ell,v)\in M$;
  $M\models_v x\bowtie c$ iff $v\bowtie c$; \mbox{$M\models_v x.\gamma$ iff $M\models_0 \gamma$};
\item $M\models_v \gamma_1\land\gamma_2$ iff $M\models_v \gamma_1$
  and $M\models_v \gamma_2$;
\item $M\models_v \gamma_1\lor\gamma_2$ iff $M\models_v \gamma_1$
  or $M\models_v \gamma_2$.
% \item $M\models_v \ell$ if $(\ell,v)\in M$;
% \item $M\models_v x\bowtie c$ if $v\bowtie c$;
% \item $M\models_v x.\gamma$ if $M\models_0 \gamma$.
\end{itemize}
We say that $M$ is a \emph{model} of $\gamma$ with respect
to $v$ iff $M \models_v \gamma$.\footnote{Note that $\models_v$ is \emph{monotonic}: if $M \subseteq M'$ and $M \models_v \gamma$ then $M' \models_v \gamma$.}
%A set $M$ of states is said to be a \emph{minimal model} of the
%formula $\gamma\in\Gamma(S)$ with respect to a clock valuation
%$v\in\R^+$ if and only if $M\models_v \gamma$ and there is no proper
%subset $M'\subset M$ with $M'\models_v \gamma$.
A run of $\A$ on a timed word
$(\sigma_1,\tau_1)(\sigma_2,\tau_2)\dots \in T\Sigma^\omega$ is a
rooted directed acyclic graph (DAG) $G=\langle V, \to \rangle$ with vertices of the
form $(s,v,i)\in S\times \R_{\geq 0} \times \N_{\geq 0}$, $(s_0,0,0)$ as the root,
and edges as follows: for every vertex $(s,v,i)$, there is a
% a set $\alpha\subseteq \Sigma$ containing $\sigma_{i+1}$ and 
model $M$ of the formula $\atransitions(s,\sigma_{i+1})$
with respect to $v+(\tau_{i+1}-\tau_i)$ (again, $\tau_0=0$) such that
there is an edge $(s,v,i) \to (s',v',i+1)$ for every state $(s',v')$ in $M$.
A run $G$ of $\A$ is \emph{accepting} iff
every infinite path in $G$ visits $F$ infinitely often.
%there is no infinite path in $G$ that
%visits $F$ only finitely often.
A timed word is \emph{accepted} by $\A$
iff $\A$ has an accepting run on it. We denote by $\sem{\A}$ the timed language accepted by $\A$.
For convenience, in the sequel we will regard \nfa{s} as
(untimed) \ocata{s} with finite acceptance conditions and
whose transition functions are simply disjunctions over locations.
%\afa{} $\langle \Sigma, S^\A, s_0^\A, \atransitions^\A, F^\A \rangle$ such that
%$\atransitions^\A(s^\A, a)$ is simply a disjunction over $S^\A$ for each $s^\A \in S^\A$ and $a \in \Sigma$.
% that are accepted by some run of $\A$.

%It is also useful to see a run as a linear sequence of
%\emph{configurations} (i.e.~finite sets of states) which gather all
%states at a given DAG level. Formally, from a DAG $G=(V,{\to})$ we
%extract the sequence of configurations $L_0, C_1,\dots$ where
%$L_i=\{(\ell,v)\mid (\ell,v,i)\in V\}$ for all $i\geq 0$.\footnote{In
%  the current (infinite-word) setting, we cannot define acceptance conditions in
%  terms of configurations as in~\cite{OuaWor07}.} 

% Given a configuration $C$ and delay $t\geq 0$, we let $C+t$ be the
% configuration $\{(\ell,v+t)\mid (\ell,v)\in C\}$. For two
% configurations $C$ and $C'$, we write $C\xrightarrow {t,\sigma} C'$
% with $t\in\R^+$ and $\sigma\in \Sigma$ if
% $C+t=\{(\ell_i,v_i)\mid i\in I\}$, and $C'=\bigcup_i M_i$ where, for
% each $i\in I$, there exists $\alpha\subseteq\Sigma$ with
% $\sigma\in\alpha$ such that $M_i$ is a minimal model of
% $\atransitions(\ell_i,\alpha)$ with respect to $v_i$. A run of
% \ocata{} $\A$ over the timed word
% $\sigma=(\sigma_1,\tau_1)(\sigma_2,\tau_2)\dots \in T\Sigma^\omega$
% is then a sequence of configurations $L_0, C_1,\dots$ such that
% $L_0 = \{(\ell_0,0)\}$, and for all $i\geq 0$,
% $L_i\xrightarrow{\tau_{i+1}-\tau_i,\sigma_{i+1}} C_{i+1}$ (where
% $\tau_0$ is set to $0$).

%

\begin{example}\label{ex:OCATA}
  \begin{figure}[h]
    \centering \scalebox{.75}{\begin{tikzpicture}[node distance =
        2.5cm] \node[initial left,accepting,state](0){$s_0$};
        \node[state, right of=0](1){$\wedge$};
        \node[state, right of=1](2){$s_1$};
%        \node[circle, minimum size = 0.5em, inner sep = 0mm, fill,
%        below of=0](3){};
        \path
        (0) edge[loopabove,->] node[above]{$\neg p$} (0)
        (0) edge[-] node[above]{$p \wedge \neg q$} (1)
        (0) edge[loopbelow,->] node[below]{$p \wedge q$} (0)
%        (3) edge[->,bend left] node[below] {} (0)
%        (3) edge[->] node[above] {} (2.5, -2.5)
        (1) edge[->,bend left] node[below] {} (0)
        (1) edge[->] node[above] {$x:=0$} (2)
        (2) edge[->,loopabove] node[above]{$\top$} (2)
        (2) edge[->] node[above]{$x\leq 1, q$} (7,0);
      \end{tikzpicture}}
    \caption{An \ocata{} accepting 
      $\sem{\globally(p\Rightarrow \eventually_{\leq 1}q)}$.}
    \label{fig:OCATA}
  \end{figure}
  \begin{figure}[h]
    \centering \scalebox{1}{\begin{tikzpicture}

		  \node[text width=1.2cm]	(0) {$(s_0, 0, 0)$};
        \node[text width=1.6cm, right = 4mm of 0] (1) {$(s_0, 0.42, 1)$};
        \node[text width=1.7cm, above right = 1mm and 4mm of 1] (2) {$(s_0, 0.42, 2)$};
        \node[text width=1.5cm, below right = 1mm and 4mm of 1] (3) {$(s_1, 0, 2)$};
        \node[text width=1.4cm, right = 4mm of 2] (4) {$(s_0, 0.7, 3)$};
        \node[text width=0.5cm, right = 0mm of 4] (6) {$\dots$};

        \path
        (0) edge[->] (1)
        (1.east) edge[->] (2.west)
        (1.east) edge[->] (3.west)
        (2) edge[->] (4);

      \end{tikzpicture}}
    \caption{A run of the \ocata{} in \figurename~\ref{fig:OCATA} on
      the timed word $(\emptyset,0.42)(\{p\},0.42)(\{q\},0.7)\cdots$.}
    \label{fig:OCATArun}
  \end{figure}
  Consider the \ocata{} over $\Sigma_{\{p,q\}}$ in \figurename~\ref{fig:OCATA}
	which accepts $\sem{\globally(p\Rightarrow \eventually_{\leq 1}q)}$.
%\footnote{To ease the presentation, we
%use Boolean formulae over atomic propositions to represent letters, e.g., $\neg p$ stands for $\{ \sigma \in \Sigma \mid p \notin \sigma \}$.}
	A run of it on 
$(\emptyset,0.42)(\{p\},0.42)(\{q\},0.7)\cdots$
	is depicted in \figurename~\ref{fig:OCATArun} where the root is
  $(s_0,0,0)$. This vertex has a
  single successor $(s_0,0.42,1)$, which in turn has two successors
  $(s_0,0.42,2)$ and $(s_1,0,2)$ (after firing the
  transition $\atransitions(s_0, \{p\}) = s_0 \wedge x.s_1$). Then, $(s_1,0,2)$ has no successor
  since the empty set is a model of $\atransitions(s_1, \{q\}) = x \leq 1$ with respect to $0.28$. 
%	The associated sequence of configurations starts by:
%  $\{(\ell_0,0)\},\{(\ell_0,0.42)\},\{(\ell_0,0.42),(\ell_1,0)\}\cdots$
\end{example}

\paragraph{Verification problems} In this work we are concerned with the following standard
verification problems. Given an \emitl{} formula $\phi$, the \emph{satisfiability} problem
asks whether $\sem{\phi} = \emptyset$. Given a \ta{} $\A$ and
an \emitl{} formula $\phi$, the  \emph{model-checking} problem asks whether
$\sem{\A} \subseteq \sem{\phi}$. As \ta{s} are closed under intersection and
the \emph{emptiness} problem for \ta{s} is decidable, both problems above
can be solved by first translating $\phi$ into an equivalent \ta{} $\A_\phi$.

\section{Expressiveness}

In this section we study the expressiveness of  
$\emitl_{0, \infty}$, \eecl{}, and a `counting' extension of \emitl{}.
It turned out that the class of timed languages captured by
\emitl{} is robust in the sense that it remains the same
under all these modifications.
For the purpose of the proofs below, let us assume (without loss of generality) that
the automaton $\A = \langle \Sigma, S, s_0, \atransitions, F \rangle$ in question is a \dfa{}
and at most one of $\phi_1, \dots, \phi_n$ may hold at any position in a given timed word~\cite{Wolper1994}.

\paragraph{Counting in intervals}
Recall that the constraining intervals $I$ in the
counting modalities in \examplename~\ref{ex:icd}
satisfy $0 \in I$; this non-trivial extension of \mtl{} (and \mitl{}) was first considered by
Hirshfeld and Rabinovich~\cite{HirRab99, Rabinovich2007}.
For the case of timed words, it is shown in~\cite{KriMad16} that allowing arbitrary $I$ (e.g.,~$(1, 2)$)
makes the resulting logic even more expressive. 
Here we show that, by contrast, adding the ability to count in $I$---regardless of whether $0 \in I$---does not increase the expressive power of \emitl{}.\footnote{As \emitl{} can easily express the `until with threshold' modalities of \ctmitl{},
the latter is clearly subsumed by \emitl{}.}
We consider an extention of \emitl{} (which we call~\cemitl{})
that enables specifying the number of positions within a given interval $I$ from now at which final locations can be reached.
More precisely, we have the following semantic clause in \cemitl{}:
\begin{itemize}
\item $(\rho,i)\models \A^{\geq k}_I(\phi_1, \dots, \phi_n)$ iff there exists
  $j_1 < \dots < j_k$ such that for each $\ell$, $1 \leq \ell \leq k$, (i) $j_\ell \geq i$; (ii) $\tau_{j_\ell}-\tau_{i} \in I$; and (iii) there is an
accepting run of $\A$ on some $a_i \dots a_{j_{\ell}}$ where $a_{\ell'} \in \{1, \dots, n\}$
and $(\rho, \ell') \models \phi_{a_{\ell'}}$ for each $\ell'$, $i \leq \ell' \leq j_\ell$.  
\end{itemize}

\begin{figure}[h]
\centering
	\scalebox{.75}{\begin{tikzpicture}[node distance = 2.5cm]
		\node[initial left ,state] (0) {$s_0^1$};
		\node[state, right of=0] (1) {$s_1^1$};
		\node[state, right of=1] (2) {$s_2^1$};
		\path
		(0) edge[loopabove, ->] (0)
		(1) edge[loopabove, ->] (1)
		(0) edge[->] (1)
		(1) edge[->] (2)
		(2) edge[bend left, ->] (0)
		(2) edge[loopabove, ->] (2);
	\end{tikzpicture}}
\caption{$\A^1$ in the proof of Theorem~\ref{thm:countinguseless}.}
\label{fig:counting}
\end{figure}
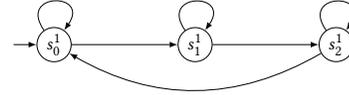

\begin{theorem}\label{thm:countinguseless}
$\cemitl$ and \emitl{} are equally expressive over timed words.
\end{theorem}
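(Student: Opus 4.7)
The direction $\emitl\subseteq\cemitl$ is immediate, since every modality $\A_I$ coincides with $\A^{\geq 1}_I$. For the converse I would inductively translate each counting modality $\A^{\geq k}_I(\phi_1,\dots,\phi_n)$, processed innermost-first, into an equivalent $\emitl$ formula. Throughout, I lean on the standing assumption that $\A$ is a \dfa{} with mutually exclusive label formulae, so that the run of $\A$ starting from any position $i$ is uniquely determined by $\rho$.

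The central gadget I would construct is an \nfa{} $\A^{(k)}$ that simulates $\A$ while maintaining a saturating counter $c\in\{0,1,\dots,k\}$, incremented whenever the simulated copy of $\A$ enters its final set; $\A^{(k)}$ accepts when $c=k$. An accepting run of $\A^{(k)}$ on $a_i\cdots a_j$ thus witnesses positions $i\leq\ell_1<\dots<\ell_k=j$ at each of which $\A$ accepts $a_i\cdots a_{\ell_m}$. For an upper-bound constraining interval $I=[0,b]$ (or $[0,b)$) this immediately gives $\A^{\geq k}_I\phi\equiv\A^{(k)}_I\phi$, because $\ell_m\leq j$ together with monotonicity of $(\tau_\ell)$ force every witness into $[\tau_i,\tau_i+b]$ automatically.

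For an interval with nonzero lower bound $a$ the naive equivalence fails, because the outer modality only constrains $\tau_{\ell_k}-\tau_i$ and intermediate witnesses could fall below $\tau_i+a$. My remedy is to delay the counting by a prophecy of the state of $\A$ at the entry of $I$: for each state $q$ of $\A$, build a compound \nfa{} $\B^{(q,k)}$ that first simulates $\A$ up to a non-deterministic switch position $\ell^*$ at which $\A$ is in state $q$, and then runs the saturating-counter gadget from $q$ onwards. Wrapping each $\B^{(q,k)}$ in an outer $\emitl$ modality whose interval pins \emph{the switch position} into $I$ (rather than the end of the run) and taking the disjunction over $q$ yields the desired formula. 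The three cycling components of $\A^1$ in \figurename~\ref{fig:counting} realise precisely this pattern, demarcating the pre-interval, in-interval, and post-interval phases of the run, with counting activated only in the middle phase.

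The hardest part will be simultaneously enforcing the lower bound on the first counted witness $\ell_1$ and the upper bound on the last one $\ell_k$ using interval constraints that natively constrain only a single endpoint of the run. The trick is to pair the prophecy of $q$ with the outer interval so that once $q$ and $\ell^*$ are guessed correctly the entire counting phase is already confined to $I$; the upper bound then follows from the time-shifted interval on the inner modality. The converse direction of the equivalence is a routine witness-building argument: given any $k$ witnesses in $I$, pick $\ell^*$ just before the first of them and $q$ to be the state of $\A$ at $\ell^*$, producing a matching accepting run of $\B^{(q,k)}$.
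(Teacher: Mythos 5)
Your first direction and your treatment of downward-closed intervals are fine, but the general case has a genuine gap at exactly the point you flag as ``the hardest part.'' An \emitl{} modality $\B_J$ constrains only the time of the \emph{final} position of the accepting run of $\B$; it cannot ``pin the switch position $\ell^*$ into $I$'' while letting the run continue past $\ell^*$ to do the counting. If instead you end the outer run at $\ell^*$ and push the counting into a modality nested in a transition label, that inner modality is anchored at $\ell^*$, so the upper bound it would need on the last witness is $b-(\tau_{\ell^*}-\tau_i)$ --- a quantity depending on a real value, not a fixed interval. Your closing sentence (``pair the prophecy of $q$ with the outer interval so that \dots the entire counting phase is already confined to $I$'') asserts that this tension resolves itself but does not say how; guessing the state $q$ of $\A$ at $\ell^*$ carries no temporal information, and no combination of the one-endpoint constraints you have introduced verifies both that $\ell^*$ enters $I$ and that the $k$-th witness has not left it. This is precisely the kind of intermediate-position timing constraint that the rest of the paper goes to considerable lengths to avoid.

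The paper sidesteps the issue entirely with a residue argument. Since $\A$ is deterministic and the argument formulae are mutually exclusive, the accepting positions of the unique run from $i$ are linearly ordered and can be numbered $1,2,3,\dots$; the numbers of those accepting positions falling in $I$ form a \emph{contiguous} block of integers. Hence there are at least $k$ of them iff every residue class modulo $k$ is realised by some accepting position in $I$. The automaton $\A^1$ of \figurename~\ref{fig:counting} is a modulo-$3$ counter of accepting positions (not, as you surmise, a pre-/in-/post-interval phase marker): the product $\A^{3,\ell}$ accepts exactly the prefixes ending at an accepting position whose ordinal is $\equiv\ell\pmod{3}$, and $\A^{\geq 3}_I$ becomes $\bigwedge_{\ell\in\{0,1,2\}}\A^{3,\ell}_I$. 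Each conjunct constrains only the endpoint of its own run, so this works uniformly for arbitrary non-singular $I$, with no case split on the shape of the interval and no need to time an interior position. To repair your proof you would need either to adopt this pigeonhole-on-residues idea or to supply a concrete mechanism for the lower-bound case, which as written you do not have.
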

\begin{proof}
We give an \emitl{} equivalent of $\A^{\geq k}_I(\phi_1, \dots, \phi_n)$.
Provided that $\phi_1$, \dots, $\phi_n$ are already in \emitl{} and $\A$ is deterministic
in the sense above, we can count modulo $k$ the number of positions
where final locations are reached and ensures that $I$ encompasses all possible values of the counter;
in contrast to~\cite{KriMad16}, here the counter can be implemented directly using automata modalities.
We give a concrete example which should illustrate the idea.
Let $k = 3$ and $\A^2$ be the product of $\A$ and $\A^1$ (\figurename~\ref{fig:counting}),
i.e.~each location of $\A^2$ is of the form $\langle s, s^1 \rangle$ where $s \in S$ and $s^1 \in \{s_0^1, s_1^1, s_2^1 \}$,
and it is accepting iff $s$ and $s^1$ are both final.
Then, let $\A^3$ be the automaton obtained from $\A^2$
by:
\begin{itemize}
\item For all the transitions $\langle s, s_0^1 \rangle \rightarrow \langle s', s_1^1 \rangle$,
$\langle s, s_1^1 \rangle \rightarrow \langle s', s_2^1 \rangle$,
and $\langle s, s_2^1 \rangle \rightarrow \langle s', s_0^1 \rangle$,
keeping only those with $s' \in F$;
\item For all the transitions $\langle s, s_0^1 \rangle \rightarrow \langle s', s_0^1 \rangle$,
$\langle s, s_1^1 \rangle \rightarrow \langle s', s_1^1 \rangle$,
and $\langle s, s_2^1 \rangle \rightarrow \langle s', s_2^1 \rangle$,
keeping only those with $s' \notin F$.
\end{itemize}
Now let $\A^{1,\ell}$ ($\ell \in \{0, 1, 2\}$) be the automaton
obtained from $\A^1$ by adding an extra final location $s^1_F$
and the transition $s^1_{\ell - 1\ (\mathrm{mod}\ 3)} \rightarrow s^1_F$,
and let $\A^{3,\ell}$ be the corresponding product with $\A$,
keeping transitions $\langle s, s_{\ell - 1\ (\mathrm{mod}\ 3)}^1 \rangle \rightarrow \langle s', s_F^1 \rangle$
with $s' \in F$. The original formula $\A_I^{\geq 3}$ is equivalent
to $\bigwedge_{\ell \in \{0, 1, 2\}} \A^{3, \ell}_I$.
\end{proof}

\paragraph{Restricting to event clocks}
%In contrast to $\A_I$ in \emitl{}, $\A^\triangleright_I$ in \eecl{} refers
%to the \emph{minimal} position where $\A$ accepts and asserts
%that the delay must be within $I$. 
We show that the equivalence of \ecl{} and $\mitl_{0, \infty}$
carries over to the current setting. 
More specifically, an \eecl{} formula can be translated
into an equilvalent $\emitl_{0, \infty}$ formula of polynomial size (in DAG representation).
On the other hand, our translation from $\mitl_{0, \infty}$
to \eecl{} induces an exponential blow-up due to the fact that
automata $\A$ have to be determinised.

\begin{theorem}\label{thm:eeclemitl}
\eecl{} and $\emitl_{0, \infty}$ are equally expressive over timed words.
\end{theorem}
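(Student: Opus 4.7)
The theorem splits into two inclusions.

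For $\eecl \subseteq \emitl_{0,\infty}$, the translation is essentially syntactic. The event-clock modality $\oset{\triangleright}{\A}_I$ asserts that the minimum $\A$-witness position lies in $I$, which is equivalent to the conjunction of ``some $\A$-witness lies in $I$'' and ``no $\A$-witness lies strictly below $\inf I$''. For example, $\oset{\triangleright}{\A}_{[a,b]} \equiv \A_{\leq b} \wedge \neg \A_{<a}$ and $\oset{\triangleright}{\A}_{[a,\infty)} \equiv \A_{[0,\infty)} \wedge \neg \A_{<a}$; in both cases the right-hand side uses only lower- and upper-bound constraining intervals and so is an $\emitl_{0,\infty}$ formula. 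Since $\A$ is reused across conjuncts, the blow-up is polynomial in the DAG representation.

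For $\emitl_{0,\infty} \subseteq \eecl$, the upper-bound case is immediate: ``some witness at time $\leq b$'' coincides with ``the minimum witness is at time $\leq b$'', hence $\A_{\leq b} \equiv \oset{\triangleright}{\A}_{\leq b}$ and no determinisation is needed. The crux is the lower-bound case $\A_{\geq a}(\phi_1,\dots,\phi_n)$, where $\A$ may admit witnesses both before and after time $a$, so that $\oset{\triangleright}{\A}_{\geq a}$ is strictly stronger. My plan is to first determinise $\A$ into a \dfa{} $\A^d$ with state set $Q^d$ (the source of the exponential blow-up). For each $q \in Q^d$, let $\Phi_q := \oset{\triangleright}{\A^d_q}_{[0,\infty)}$, where $\A^d_q$ is $\A^d$ rebased at initial state $q$; intuitively, $\Phi_q$ at position $\ell$ says ``$\A^d$ starting at state $q$ has an accepting continuation from $\ell$''. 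Writing $P(\ell) := \Phi_{q(i,\ell-1)}$, where $q(i,\ell-1)$ is the unique $\A^d$-state at $\ell - 1$ when run from $i$, a short calculation shows that $P$ is downward closed in $\ell$ and that $\A_{\geq a}$ at $i$ is equivalent to $P(\ell^*)$, where $\ell^*$ is the first position with $\tau_{\ell^*} - \tau_i \geq a$ (modulo a trivial subcase with infinitely many witnesses).

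To realise $P(\ell^*)$ in \eecl{}, I build an NFA $\mathcal{D}$ that runs from $i$, tracks $\A^d$'s state, and reads pairwise exclusive labels $\theta_{(\sigma,Y)} := \sigma \wedge \bigwedge_{q \in Y}\Phi_q \wedge \bigwedge_{q \notin Y}\neg\Phi_q$ over $\Sigma_\AP \times 2^{Q^d}$. On label $(\sigma,Y)$, $\mathcal{D}$ deterministically updates $\A^d$'s state via $\sigma$ and signals acceptance exactly when its pre-transition state $q$ lies in $Y$ -- i.e., at positions where $P$ holds. By monotonicity the accepting set is an initial segment $\{1,\dots,\ell_0\}$; $\A_{\geq a}$ at $i$ then amounts to $\tau_{\ell_0} \geq a$, which I plan to encode by taking an event-clock modality over the complement automaton $\mathcal{D}^c$ (whose minimum accepting position is $\ell_0 + 1$) combined with Boolean refinements that rule out the off-by-one boundary case, together with a separate disjunct capturing the infinite-witness scenario. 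The main obstacle is exactly this semantic mismatch -- event-clock modalities only expose the \emph{minimum} witness, yet $\A_{\geq a}$ is \emph{existential} -- which forces us to use determinism in two ways: to encode the state-dependent predicate $P$ via the local sub-formulae $\Phi_q$, and to justify the monotonicity reduction. The size-$2^{|Q^d|}$ alphabet needed to record which $\Phi_q$'s hold at each position is what makes the translation exponential in $|\A|$.
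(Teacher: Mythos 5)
Your first direction and the upper-bound half of the second direction are correct and essentially identical to the paper's: $\oset{\triangleright}{\A}_I$ is the conjunction of an upper-bound witness claim and the negation of a strict-lower-bound witness claim, and $\A_{\leq b}\equiv\oset{\triangleright}{\A}_{\leq b}$ because a downward-closed existential claim is witnessed by the minimal witness. Your reduction of $\A_{\geq a}$ to the predicate $P(\ell)$ (``the unique $\A^d$-state reached just before $\ell$ still has an accepting continuation from $\ell$''), the observation that $P$ is downward closed so that its support is an initial segment $\{i,\dots,\ell_0\}$, and the equivalence of $\A_{\geq a}$ with $\tau_{\ell_0}-\tau_i\geq a$ are all sound (modulo the standing assumption, which you should state, that at most one of $\phi_1,\dots,\phi_n$ holds at each position so that the run of $\A^d$ and hence $q(i,\ell-1)$ is well defined; your tracking automaton $\mathcal{D}$ should also read the $\phi_a$'s rather than raw letters of $\Sigma_\AP$). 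This packaging via the enriched alphabet $2^{Q^d}$ and the local formulae $\Phi_q$ is a genuinely different organisation from the paper, which instead case-splits on whether there are events in $[\tau_i,\tau_i+c)$ besides $(\sigma_i,\tau_i)$ and splits the unique run of $\A$ at the state $s$ reached at the \emph{last such event}.

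The gap is that the one step you defer to ``Boolean refinements that rule out the off-by-one boundary case'' is precisely where the entire difficulty of this direction lives, and it is not supplied. Event-clock modalities expose only the time of the \emph{first} accepting position of an automaton, so $\oset{\triangleright}{\mathcal{D}^c}$ gives you $\tau_{\ell_0+1}-\tau_i$, and when $\tau_{\ell_0+1}-\tau_i\geq a$ you still cannot tell whether $\tau_{\ell_0}-\tau_i\geq a$ or $\ell_0=\ell^*-1$; no Boolean combination of the modalities you have introduced so far distinguishes these two situations, since they can differ only in the timestamp of a single event. You need an additional construction: either (a) the paper's device, namely automata $\B^{s,\phi}$ and $\C^s$ that let one assert ``at every position at distance $<a$, the state entered there still admits an accepting continuation \emph{strictly after} that position,'' which by downward closure reduces to the last event before $\tau_i+a$ and whose witness then necessarily falls at distance $\geq a$; or (b) in your own framework, an automaton whose \emph{unique} accepting position is $\ell_0$ itself, obtained by labelling the transition at $\ell_0$ with a one-step lookahead $\nextx\neg\Phi_{q}$ (expressible in \eecl{} by a two-transition untimed automaton modality), after which $\oset{\triangleright}{\cdot}_{\geq a}$ applied to it closes the case. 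Either fix is compatible with your setup, but without one of them the proof of the lower-bound case is incomplete.
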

\begin{proof}
Again, we assume that the arguments $\phi_1$, \dots, $\phi_n$ are already in the target logic.
The direction from \eecl{} to $\emitl_{0, \infty}$ is simple and almost identical
to the translation from \ecl{} to $\mitl_{0, \infty}$; for example, $\oset{\triangleright}{\A}_{(3, 5)}$
can be written as $\A_{<5} \wedge \neg \A_{\leq 3}$. For the other direction
consider the following $\emitl_{0, \infty}$ formulae:
\begin{itemize}
\item $(\rho, i) \models \A_{\leq c}$: the equivalent formula is simply $\oset{\triangleright}{\A}_{\leq c}$.
\item $(\rho, i) \models \A_{\geq c}$: as in~\cite{Raskin1999}, we consider the subcases where:
\begin{itemize}
\item There is no event in $[\tau_i, \tau_i + c)$ apart from $(\sigma_i, \tau_i)$:
let $\A^2$ be the product of $\A$ and $\A^1$ where $\A^1$ is the automaton depicted in \figurename~\ref{fig:cequals0}.  
We have $(\rho, i) \models \neg \oset[0ex]{\triangleright}{\A}^1_{<c} \wedge \oset[0ex]{\triangleright}{\A}^2_{\geq c}$.

\item There are events in $[\tau_i, \tau_i + c)$ other than $(\sigma_i, \tau_i)$:
let the last event in $[\tau_i, \tau_i + c)$ be $(\sigma_j, \tau_j)$
and $k > j > i$ be the minimal position such that there exists
$a_i \dots a_k \in \sem{\A}$ with $(\rho, \ell) \models \phi_{a_\ell}$ for all $\ell$, $i \leq \ell \leq k$.
By assumption, $a_i \dots a_k$ is unique and $\A$ must reach a  
specific location $s \in S$ after reading $a_i \dots a_j$.
The idea is to split the unique run of $\A$ on $a_i \dots a_k$
at $s$: we take a disjunction over all possible $s \in S$,
enforce that $\tau_j - \tau_i < c$ and $\A$ reaches
a final location from $s$ by reading $a_{j+1} \dots a_k$.
More specifically, let $\B^{s, \phi}$ be the automaton obtained from $\A$
by adding a new location $s_F$, declaring it as the only final location,
and adding new transitions $s' \xrightarrow{\phi_a \wedge \phi} s_F$
for every $s' \xrightarrow{\phi_a} s$ in $\A$.
% such that $\atransitions^\A(s, a) = s^\A$
%iff $\atransitions^{\A'}(s, a') = s_F^{\A'}$ for all $s \in S^\A$ and $a \in \Sigma^\A$, 
%$a'$ is a new letter such that $\phi_{a'} = \phi_a \wedge \phi$,
%and setting the set of final locations to $\{s_F^{\A'}\}$.
Let $\C^s$ be the automaton obtained from $\A$ by adding new non-final locations $s_0'$ and $s_1'$,
adding new transitions $s_0' \rightarrow s_1'$
(i.e.~labelled with $\top$) and $s_1' \xrightarrow{\phi_a} s''$
for every $s \xrightarrow{\phi_a} s''$ in $\A$,
removing outgoing transitions from all the final locations, and finally 
setting the initial location to $s_0'$.
We have $(\rho, i) \models \oset[0ex]{\triangleright}{\A}^1_{<c} \wedge \oset{\triangleright}{\A} \wedge \neg \bigvee_{s \in S} \oset[0ex]{\triangleright}{\B}^{s, \phi}_{<c}$
where $\phi = \neg \C^s$. 
\end{itemize}
The equivalent formula is the disjunction of these.
\end{itemize}
The other types of constraing intervals, such as $[0, c)$, are handled almost identically.
\end{proof}

\begin{figure}[h]
\centering
	\scalebox{.75}{\begin{tikzpicture}[node distance = 2.5cm]
		\node[initial left ,state] (0) {$s_0^1$};
		\node[state, right of=0] (1) {$s_1^1$};
		\node[state, accepting, right of=1] (2) {$s_2^1$};
		\path
		(0) edge[loopabove, ->] (0)
		(0) edge[->] (1)
		(1) edge[->] (2)
		(2) edge[loopabove, ->] (2);
	\end{tikzpicture}}
\caption{$\A^1$ in the proof of Theorem~\ref{thm:eeclemitl}.}
\label{fig:cequals0}
\end{figure}
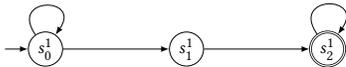

\paragraph{Restricting to one-sided constraining intervals}
Recall that a fundamental stumbling block in the algorithmic analysis of \ta{s}
is that the \emph{universality} problem is undecidable~\cite{AluDil94}.
\dta{s} with finite acceptance conditions, on the other hand, can be complemented easily and
have a decidable universality problem. 
This raises the question of whether one can extend \mitl{} with \dta{} modalities
without losing decidability (both are fully decidable formalisms).
%Indeed, if we allow only one clock in \dta{} modalities, they can
%be embedded into \ocata{s} in almost the same way as in Section~\ref{sec:emitl2ocata}.
Perhaps surprisingly, the resulting formalism already subsumes \mtl{}
even when punctual constraints are disallowed.
For example, $\eventually_{[d, d]} \phi$ can be written as
$\neg \A' \land \neg \A'' \land \eventually_{[1, \infty)} \phi$
where $\A'$ and $\A''$ are the one-clock deterministic \ta{s}
in \figurename~\ref{fig:punc1} and \figurename~\ref{fig:punc2}, respectively
(in particular, note that $\A'$ and $\A''$ only use lower- and upper-bound constraints).
It follows from~\cite{Ouaknine2006} that the satisfiability problem 
for this formalism is undecidable.
\begin{figure}[h]
\begin{minipage}[b]{0.35\linewidth}
\centering
	\scalebox{.75}{\begin{tikzpicture}[node distance = 2.5cm]
		\node[initial left ,state] (0) {};
		\node[state, accepting, right of=0] (1) {};
		\path
		(0) edge[loopabove, ->] node[above] {$x < d$} (0)
		(0) edge[->] node[above] {$x > d$} (1);
	\end{tikzpicture}}
\caption{$\A'$.}
\label{fig:punc1}
\end{minipage}
\begin{minipage}[b]{0.55\linewidth}
\centering
	\scalebox{.75}{\begin{tikzpicture}[node distance = 2.5cm]
		\node[initial left ,state] (0) {};
		\node[state, right= 2cm of 0] (1) {};
		\node[state, accepting, right= 2cm of 1] (2) {};
		\path
		(0) edge[loopabove, ->] node[above left] {$x < d$} (0)
		(1) edge[loopabove, ->] node[above] {$\neg \phi$, $x \leq d$} (1)
		(0) edge[->] node[above] {$\neg \phi$, $x \geq d$} (1)
		(1) edge[->] node[above] {$x > d$} (2);
	\end{tikzpicture}}
\caption{$\A''$.}
\label{fig:punc2}
\end{minipage}
\end{figure}
%\begin{center}
%\begin{minipage}[b]{0.5\linewidth}
%\centering
%	\begin{tikzpicture}[node distance = 2.5cm]
%		\node[initial left ,state] (0) {};
%		\node[state, above right=0.6cm and 2cm of 0] (1) {};
%		\node[state, accepting, below right=0.6cm and 2cm of 1] (2) {};
%		\path
%		(0) edge[loopabove, ->] node[above left] {$x < d$} (0)
%		(1) edge[loopabove, ->] node[above] {$\neg \phi$, $x \geq d$} (1)
%		(0) edge[->] node[above] {$\phi$, $x \geq d$} (2)
%		(0) edge[->] node[above=0.3cm] {$\neg \phi$, $x \geq d$} (1)
%		(1) edge[->] node[above=0.3cm] {$\phi$, $x \geq d$} (2);
%	\end{tikzpicture}
%\caption{$\A^3$.}
%\label{fig:punc3}
%\end{minipage}
%\end{center}
%\end{figure}
Based on a similar trick, we obtain the main result of this section: %we obtain an interesting consequence of adding automata modalities:
$\emitl_{0, \infty}$ already has the full expressive power of \emitl{}. 
%Put it differently, the addition of automata modalities 
%allows us to justifiably circumvent the cumbersome treatment of general constraining intervals.
This, together with the fact that the satisfiability and model-checking problems
for $\emitl_{0, \infty}$ are only $\mathrm{PSPACE}$-complete
(Theorem~\ref{thm:pspace}) as compared with $\mathrm{EXPSPACE}$-complete for full $\emitl$~\cite{Ferrere18}, makes $\emitl_{0, \infty}$ a
competitive alternative to other real-time specification formalisms---while a translation from $\emitl$ to $\emitl_{0, \infty}$
inevitably induces at least an exponential blow-up, it can be argued that many properties of practical interest
can be written in $\emitl_{0, \infty}$ directly (e.g.,~\examplename~\ref{ex:icd}
and \examplename~\ref{ex:overtaking}).
The idea of the proof below is similar to that of~\cite[Lemma 6.3.11]{Raskin1999}
($\mitl_{0, \infty}$ and \mitl{} are equally expressive in the continuous semantics),
but the technical details are more involved 
due to automata modalities and the fact that 
each event is not necessarily preceded by another one exactly $1$ time unit earlier
in a timed word;
the latter is essentially the reason why
the expressive equivalence of $\mitl_{0, \infty}$ and \mitl{} fails to hold in the pointwise semantics.

\begin{theorem}\label{thm:expeq}
$\emitl_{0, \infty}$ and \emitl{} are equally expressive over timed words.
\end{theorem}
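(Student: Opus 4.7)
The inclusion $\emitl_{0, \infty} \subseteq \emitl$ is immediate, so the work lies in the converse. Arguing by structural induction on formulas, it suffices to translate a single modality $\A_I(\phi_1, \dots, \phi_n)$ with bounded $I = \langle a, b \rangle$, $0 < a < b < \infty$, into $\emitl_{0, \infty}$, under the inductive hypothesis that $\phi_1, \dots, \phi_n$ are already in $\emitl_{0, \infty}$; the other bounded interval shapes ($[a, b]$, $[a, b)$, $(a, b]$, and the case $a = 0$) are handled by routine adaptation. By Theorem~\ref{thm:eeclemitl}, I may freely use the event-clock modalities of \eecl{} (whose constraining intervals are necessarily one-sided) in the translation.

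The plan, adapting Raskin's argument for $\mitl$ in the continuous setting, is to split the putative accepting run of $\A$ on $a_i \dots a_j$ at a timing-based anchor. A witness $j$ with $\tau_j - \tau_i \in (a, b)$ must satisfy $j > \ell^*$, where $\ell^* = \max\{\ell \geq i : \tau_\ell - \tau_i \leq a\}$; because $\A$ is a \dfa{}, its unique run passes through a well-defined state $s$ after reading $a_i \dots a_{\ell^*}$. I would therefore rewrite $\A_{(a, b)}$ as a disjunction $\bigvee_{s \in S}(\psi_1^s \wedge \psi_2^s)$, where $\psi_1^s$ asserts ``$\A$ is in state $s$ at position $\ell^*$'' by means of an upper-bound modality over a tailored variant of $\A$ (declaring $s$ the sole accepting location and attaching a fresh ``next-event-beyond-$a$'' transition), and $\psi_2^s$ asserts ``continuing from $s$ at position $\ell^* + 1$, $\A$ visits a final state within the remaining time budget''.

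The principal obstacle---and the reason the pointwise proof is considerably more involved than Raskin's continuous one---is capturing $\psi_2^s$ with only one-sided intervals. In the continuous semantics one can anchor at exact time $\tau_i + a$ where the signal value is observable, but no event is guaranteed at that time in a timed word; consequently the offset $\tau_{\ell^* + 1} - \tau_i \in (a, b)$ is indeterminate from position $\ell^* + 1$, as is the remaining budget $b - (\tau_{\ell^* + 1} - \tau_i)$. I would overcome this by introducing a second anchor $\ell^{**} = \max\{\ell \geq i : \tau_\ell - \tau_i \leq b\}$ and enumerating, in parallel, the joint state configuration $(s, s') \in S \times S$ of $\A$ at the two anchors; for each $(s, s')$, a product automaton in the spirit of $\A^2, \A^3$ in the proof of Theorem~\ref{thm:countinguseless} and of $\B^{s, \phi}, \C^s$ in the proof of Theorem~\ref{thm:eeclemitl} would let one-sided event-clock modalities evaluated at $i$ certify ``$\A$ visits a final state strictly between $\ell^*$ and $\ell^{**}$''. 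Since $\A$ is a \dfa{}, the disjunction over $S \times S$ is sound and complete, with an $O(|S|^2)$ blow-up per bounded modality; propagating through nesting yields the expected exponential size increase, consistent with the $\mathrm{PSPACE}$ vs.\ $\mathrm{EXPSPACE}$ complexity gap discussed above.
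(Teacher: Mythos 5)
Your overall strategy---splitting the unique run of the \dfa{} $\A$ at timing-determined anchor positions and taking a disjunction over the states reached there---is in the same family of ideas as the paper's, and you correctly locate the principal obstacle of the pointwise setting (no event is guaranteed at time $\tau_i + a$, so the residual budget at the first post-anchor event is unknown). However, the step you delegate to ``a product automaton in the spirit of'' the earlier constructions is precisely where the entire difficulty of the theorem lives, and as described it does not go through. Knowing the pair $(s, s')$ of states of $\A$ at your two anchors $\ell^*$ and $\ell^{**}$ does not determine whether the run visits $F$ strictly in between: that depends on the letters read inside the window, so you must track a ``seen-$F$'' bit, and asserting that this bit is set \emph{at some position $j$ with $a < \tau_j - \tau_i < b$} is again an existential statement with a two-sided timing constraint on a single witness---exactly the problem you started with. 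The naive conjunction of a ${<}\,b$-modality and a ${>}\,a$-modality fails because the two witnesses need not coincide, and the event-clock modalities of \eecl{} only help when the relevant witness is the \emph{minimal} accepting position, which it need not be here.

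The paper needs substantially more machinery to close this gap, none of which is foreshadowed in your sketch: it first reduces to unit-width intervals $(c, c+1)$ and proceeds by induction on $c$, so that $(c-1,c)$- and $(c-1,c]$-constrained modalities are available by the induction hypothesis; it then distinguishes five overlapping cases according to whether there is an event exactly one time unit before the witness $k$, whether some accepting prefix ends at a position in $(\tau_i + c - 1, \tau_i + c]$, and whether that half-open window contains any event at all; and in the hardest case it introduces the counting automata $\E^m$ and $\hat{\E}^m$, which exploit the pigeonhole observation that at most $c$ pairwise more-than-one-apart ``reset'' positions fit into $[\tau_i, \tau_i + c]$, so that comparing the count against the thresholds ${\leq}\, c-1$ and ${<}\, c+1$ pins the witness into $(c, c+1)$ using only one-sided constraints. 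Your proposal would need an argument of comparable substance at the hand-waved step; in its current form the reduction to ``certify that $\A$ visits a final state strictly between $\ell^*$ and $\ell^{**}$'' is a restatement of the theorem rather than a proof of it. (A smaller issue: the blow-up is not $O(|S|^2)$ per modality independently of the constants---the paper's construction is also exponential in the binary encoding of the interval endpoints.)
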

\begin{proof}
We explain in detail below how to write the \emitl{} formula $\A_{(c, c+1)} (\phi_1, \dots, \phi_n)$ where
$c \geq 0$, and $\phi_1, \dots, \phi_n  \in \emitl_{0, \infty}$
as an $\emitl_{0, \infty}$ formula; the other cases, such as $(c, c+1]$ 
and $[c, c+1]$, are similar.
%As before, we assume that $\A$ is a \dfa{}
%and at most one of $\phi_1$,~\dots, $\phi_n$ may hold at any position
%in the given timed word $\rho$.

First consider $c = 0$. If $(\rho, i) \models \A_{(0, 1)}$
for $\rho = (\sigma_1, \tau_1)(\sigma_2,\tau_2)\dots$ and $i \geq 1$, the finite word
$a_i \dots a_k$ accepted by $\A$ must be at least two letters long.
This again is enforced by $\A^1$ in \figurename~\ref{fig:cequals0}:
let $\A^2$ be the product of $\A$ and $\A^1$.
Then, let $\A^3$ be the automaton obtained from $\A^2$
by adding $\neg \nextx_{> 0} \top$ ($\nextx$ is the standard \mitl{} `next' operator~\cite{OuaWor07}) to all the transitions $\langle s, s_0^1 \rangle \rightarrow \langle s', s_0^1 \rangle$ and $\nextx_{> 0} \top$
to all the transitions $\langle s, s_0^1 \rangle \rightarrow \langle s', s_1^1 \rangle$ as conjuncts
(in doing so, extend the alphabet as necessary).
It is not hard to see that $(\rho, i) \models \A^3_{< 1}$
in the two possible situations:
(i) $\tau_{i+1} - \tau_i > 0$ and
(ii) $\tau_j - \tau_i > 0$ for some $j > i + 1$ and $\tau_\ell - \tau_i = 0$ for all $\ell$, $i < \ell < j$.
The other direction ($(\rho, i) \models \A^3_{< 1} \Rightarrow (\rho, i) \models \A_{(0, 1)}$) is straightforward.
It follows that the equivalent $\emitl_{0, \infty}$ formula is $\A^3_{< 1}$.

%For $s^\A \in S^\A$, let $\A^{[F \leftarrow \{s^\A\}]}$ be the \dfa{} obtained from $\A$
%by redefining the set of final locations to be $\{ s^\A \}$.
%and $\A^{[s_0 \leftarrow s^\A]}$ be the \dfa{} obtained from $\A$
%by setting the initial location to $s^\A$.

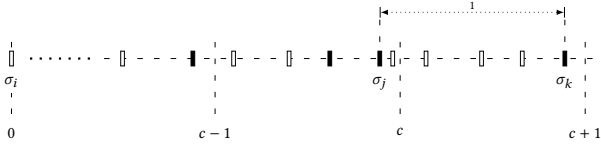
\begin{figure}[h]
\centering
\scalebox{.70}{
\begin{tikzpicture}
\begin{scope}

\draw[-, loosely dashed] (-70pt,0pt) -- (200pt,0pt);
%\draw[-, loosely dashed] (-110pt,0pt) -- (-110pt,0pt) node[at start, left] {$\rho$};

%\node[left] at (-130pt, 0pt) {$\rho$};

\draw[-, very thick, loosely dotted] (-110pt,0pt) -- (-75pt,0pt);

\draw[loosely dashed] (-120pt,-30pt) -- (-120pt,10pt) node[at start, below=2mm] {$0$};

\draw[loosely dashed] (-10pt,-30pt) -- (-10pt,10pt) node[at start, below=2mm] {$c-1$};

\draw[loosely dashed] (90pt,-30pt) -- (90pt,10pt) node[at start, below=2mm] {$c$};

\draw[loosely dashed] (79pt,0pt) -- (79pt,25pt);
\draw[loosely dashed] (179pt,0pt) -- (179pt,25pt);

\draw[loosely dashed] (190pt,-30pt) -- (190pt,10pt) node[at start, below=2mm] {$c+1$};
%\draw[loosely dashed] (130pt,-10pt) -- (130pt,50pt) node[at start, below] {$1.5$};
%\draw[loosely dashed] (-60pt,-10pt) -- (-60pt,50pt) node[at start, below] {$0.5$};
%\draw[loosely dashed] (-120pt,-10pt) -- (-120pt,50pt) node[at start, below] {$0$};
%\draw[loosely dashed] (52pt,-20pt) -- (52pt,50pt);
%\draw[loosely dashed] (160pt,-10pt) -- (160pt,30pt);

\draw[draw=black, fill=white] (-121pt, -4pt) rectangle (-119pt, 4pt);
\node[below, fill=white, inner sep=1mm] at (-120pt, -7pt) {$\sigma_i$};

\draw[draw=black, fill=white] (-61pt, -4pt) rectangle (-59pt, 4pt);

\draw[draw=black, fill=white] (1pt, -4pt) rectangle (-1pt, 4pt);

\draw[draw=black, fill=black] (53pt, -4pt) rectangle (51pt, 4pt);

\draw[draw=black, fill=white] (105pt, -4pt) rectangle (103pt, 4pt);

\draw[draw=black, fill=white] (157pt, -4pt) rectangle (155pt, 4pt);

\draw[draw=black, fill=black] (-21pt, -4pt) rectangle (-23pt, 4pt);

\draw[draw=black, fill=white] (31pt, -4pt) rectangle (29pt, 4pt);

\draw[draw=black, fill=black] (80pt, -4pt) rectangle (78pt, 4pt);
\node[below, fill=white, inner sep=1mm] at (79pt, -7pt) {$\sigma_j$};

\draw[draw=black, fill=white] (87pt, -4pt) rectangle (85pt, 4pt);

\draw[draw=black, fill=white] (135pt, -4pt) rectangle (133pt, 4pt);

\draw[draw=black, fill=black] (180pt, -4pt) rectangle (178pt, 4pt);
\node[below, fill=white, inner sep=1mm] at (179pt, -7pt) {$\sigma_k$};

\end{scope}

\begin{scope}
\draw[|<->|][dotted] (79pt,25pt)  -- (179pt,25pt) node[midway,above] {{\scriptsize $1$}};
\end{scope}

\end{tikzpicture}
}
\caption{Case (i) in the proof of Theorem~\ref{thm:expeq}; solid boxes indicate when $\A$ accepts
the corresponding prefix of $a_i \dots a_k$.}
\label{fig:exactly1before}
\end{figure}

\begin{figure}[h]
\centering
\scalebox{.70}{
\begin{tikzpicture}
\begin{scope}

\draw[-, loosely dashed] (-70pt,0pt) -- (200pt,0pt);
%\draw[-, loosely dashed] (-110pt,0pt) -- (-110pt,0pt) node[at start, left] {$\rho$};

%\node[left] at (-130pt, 0pt) {$\rho$};

\draw[-, very thick, loosely dotted] (-110pt,0pt) -- (-75pt,0pt);

\draw[loosely dashed] (-120pt,-30pt) -- (-120pt,10pt) node[at start, below=2mm] {$0$};

\draw[loosely dashed] (-10pt,-30pt) -- (-10pt,10pt) node[at start, below=2mm] {$c-1$};

\draw[loosely dashed] (90pt,-30pt) -- (90pt,10pt) node[at start, below=2mm] {$c$};

\draw[loosely dashed] (79pt,0pt) -- (79pt,25pt);
\draw[loosely dashed] (179pt,0pt) -- (179pt,25pt);

\draw[loosely dashed] (190pt,-30pt) -- (190pt,10pt) node[at start, below=2mm] {$c+1$};
%\draw[loosely dashed] (130pt,-10pt) -- (130pt,50pt) node[at start, below] {$1.5$};
%\draw[loosely dashed] (-60pt,-10pt) -- (-60pt,50pt) node[at start, below] {$0.5$};
%\draw[loosely dashed] (-120pt,-10pt) -- (-120pt,50pt) node[at start, below] {$0$};
%\draw[loosely dashed] (52pt,-20pt) -- (52pt,50pt);
%\draw[loosely dashed] (160pt,-10pt) -- (160pt,30pt);

\draw[draw=black, fill=white] (-121pt, -4pt) rectangle (-119pt, 4pt);
\node[below, fill=white, inner sep=1mm] at (-120pt, -7pt) {$\sigma_i$};

\draw[draw=black, fill=white] (-61pt, -4pt) rectangle (-59pt, 4pt);

\draw[draw=black, fill=white] (1pt, -4pt) rectangle (-1pt, 4pt);

\draw[draw=black, fill=white] (53pt, -4pt) rectangle (51pt, 4pt);

\draw[draw=black, fill=white] (105pt, -4pt) rectangle (103pt, 4pt);

\draw[draw=black, fill=white] (157pt, -4pt) rectangle (155pt, 4pt);

\draw[draw=black, fill=black] (-21pt, -4pt) rectangle (-23pt, 4pt);

\draw[draw=black, fill=white] (31pt, -4pt) rectangle (29pt, 4pt);

%\draw[draw=black, fill=black] (80pt, -4pt) rectangle (78pt, 4pt);
%\node[below, fill=white, inner sep=1mm] at (79pt, -7pt) {$\sigma_j$};

\draw[draw=black, fill=black] (87pt, -4pt) rectangle (85pt, 4pt);
\node[below, fill=white, inner sep=1mm] at (86pt, -7pt) {$\sigma_j$};

\draw[draw=black, fill=white] (135pt, -4pt) rectangle (133pt, 4pt);

\draw[draw=black, fill=black] (180pt, -4pt) rectangle (178pt, 4pt);
\node[below, fill=white, inner sep=1mm] at (179pt, -7pt) {$\sigma_k$};

\end{scope}

\begin{scope}
\draw[|<->|][dotted] (79pt,25pt)  -- (179pt,25pt) node[midway,above] {{\scriptsize $1$}};
\end{scope}

\end{tikzpicture}
}
\caption{Case (ii) in the proof of Theorem~\ref{thm:expeq}.}
\label{fig:blackright}
\end{figure}
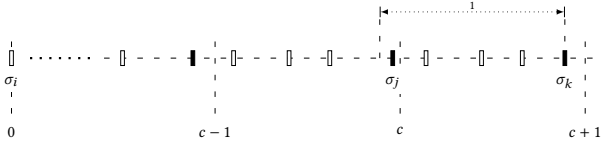

Now consider $c > 0$.
Suppose that $(\rho, i) \models \A_{(c, c + 1)}$
for $\rho = (\sigma_1, \tau_1)(\sigma_2,\tau_2)\dots$ and $i \geq 1$,
let $k > i$ be the \emph{minimal} position such that
$\tau_k - \tau_i \in (c, c+1)$ and there exists $a_i \dots a_k \in \sem{\A}$
with $(\rho, \ell) \models \phi_{a_\ell}$ for all $\ell$, $i \leq \ell \leq k$
(since at most one of $\phi_1$, \dots, $\phi_n$ may hold at any position, we fix
$a_i \dots a_k$ below). 
Consider the following cases (note that they are not mutually disjoint):
\begin{enumerate}[label=(\roman*)]
\item There exists a \emph{maximal} $j$, $i < j < k$ such that (a) $\tau_k - \tau_j = 1$ 
and (b) there is no $\ell$, $j < \ell < k $ such that $a_i \dots a_{\ell} \in \sem{\A}$
with $(\rho, \ell') \models \phi_{a_{\ell'}}$ for all $\ell'$, $i \leq \ell' \leq \ell$
(\figurename~\ref{fig:exactly1before}): 
we take a disjunction over all possible $s \in S$ such that $\A$
reaches $s$ after reading $a_i \dots a_j$ and
enforce that $\tau_j - \tau_i \in (c - 1, c)$ (which we can, by the IH),
 $\A$ reaches a final location from $s$ by reading $a_{j+1} \dots a_k$, and $\tau_k - \tau_j = 1$;
thanks to (b), the last condition, which is otherwise inexpressible in $\emitl_{0, \infty}$,
can be expressed as a conjunction of two formulae labelled with $\leq 1$ and $\geq 1$.
To this end, we use $\B^{s, \phi}$ and $\C^s$ as defined in the proof
of Theorem~\ref{thm:eeclemitl}: we have  
\[
(\rho, i) \models \phi^1 = \bigvee_{s \in S} \B^{s, \phi}_{(c-1, c)} 
\]
where $\phi = \C^s_{\leq 1} \land \C^s_{\geq 1}$.

\item There exists $j$, $i < j < k$ such that $\tau_k - \tau_j < 1$, 
$\tau_j - \tau_i \in (c-1, c]$, and $a_i \dots a_j \in \sem{\A}$
with $(\rho, \ell) \models \phi_{a_{\ell}}$ for all $\ell$, $i \leq \ell \leq j$ (\figurename~\ref{fig:blackright}): let $\D^s$ be the automaton obtained from $\A$
in the same way as $\C^s$ except that we do not remove outgoing transitions
from the final locations.
Regardless of whether there is a event at $\tau_k - 1$, 
it is clear that every position $\ell$ with $\tau_{\ell} - \tau_i \in (c - 1, c]$ must 
satisfy $\C_{(0, 1)}^{s}$ where $s$ is
the location of $\A$ after reading $a_i \dots a_{\ell}$.
%let $\A'''(s^\A)$ be the automaton obtained from $\A$ by adding a new non-final location $s_0^{\A'''}$
%such that $\atransitions^{\A'''}(s_0^{\A'''}, a) = s^{\A}$
%for every $a \in \Sigma^\A$, and setting the initial location to $s_0^{\A'''}$.
We have 
\[
(\rho, i) \models \phi^2 = \A_{(c-1, c]} \land \neg \bigvee_{s \in S} \B^{s, \phi}_{(c-1, c]}
\]
where $\phi = \neg \D^s_{(0, 1)}$.

\item There exists %a minimal
$j$, $i < j < k$ such that $\tau_k - \tau_j < 1$, 
$\tau_j - \tau_i \in (c-1, c]$, but % there is no $j'$, $i \leq j' < j$
%such that $\tau_k - \tau_{j'} > 1$, $\tau_{j'} - \tau_i \in (c-1, c]$; moreover, 
there is no $\ell$, $i < \ell < k$ such that (a) $\tau_{\ell} - \tau_i \in (c-1, c]$
and (b) $a_i \dots a_{\ell} \in \sem{\A}$
with $(\rho, \ell') \models \phi_{a_{\ell'}}$ for all $\ell'$, $i \leq \ell' \leq \ell$: 
we have 
\[
(\rho, i) \models \phi^3 = \neg \A_{(c-1, c]} \wedge \bigvee_{s \in S} \B^{s, \phi}_{(c-1, c]}
\]
where $\phi = \D^s_{(0, 1)}$.

\item There exists a \emph{maximal} $j$, $i < j < k$ such that $\tau_k - \tau_j > 1$, $\tau_j - \tau_i \in (c-1, c]$,
and there is no $\ell$, $j < \ell < k$ such that (a) $\tau_{\ell} - \tau_i \in (c-1, c]$
and (b) $a_i \dots a_{\ell} \in \sem{\A}$
with $(\rho, \ell') \models \phi_{a_{\ell'}}$ for all $\ell'$, $i \leq \ell' \leq \ell$:
observe that (provided that $s$'s are correctly instantiated to the locations $\A$ reaches
as it reads $a_i \dots a_k$)
while $\C^s_{>1}$ 
may hold arbitrarily often in $[\tau_i, \tau_i + c]$,
the number of positions $\ell$, $i \leq \ell \leq j$ satisfying
\begin{equation}\label{eq:count}
(\rho, \ell) \models \C^s_{>1} \wedge (\rho, \ell + 1) \models \big(\C^s_{\leq 1} \vee (\C^s_{> 1} \wedge s \in F)\big)
\end{equation}
is at most $c$
(since any two of such positions must be separated by more than $1$ time unit).
%We can make use of this fact to write a formula that captures this case.
We define a family of automata modalities $\{\E^m \mid m \geq 1 \}$ such that
each location of $\E^m$ is of the form $\langle s, d \rangle$
with $s \in S$ and $1 \leq d \leq 2m$; see \figurename~\ref{fig:example-b} for an illustration. 
Each transition updates the $s$-component as $\A$ would, enforces the formula labelled on the corresponding transition of $\A$
and, additionally, the formula as labelled in \figurename~\ref{fig:example-b} (with $s$ being the \emph{target} location of
the corresponding transition of $\A$).
The formula $\phi^s$ (illustrated in \figurename~\ref{fig:example-b2}), which also follows $\A$
with an $s$-component, checks that the next position either satisfies (a) $s \in F$, 
or (b) $\C^s_{\leq 1}$ holds continuously until $s \in F$ eventually holds.
Let $\hat{\E}^m$ be obtained from $\E^m$ by `inlining' $\phi^s$: removing the leftmost locations of 
$\phi^s$ and merge the middle locations of $\phi^s$ with the rightmost locations of $\E^m$.
Apparently, $\hat{\E}^m$ and ${\E^m}$ are equivalent if there is no constraing interval---the only
difference between them is which position is `timed'.
Now suppose that the number of positions $\ell$, $i \leq \ell \leq j$ satisfying (\ref{eq:count}) is $m$. 
Since $j$ is the last of these positions, we have $(\rho, i) \models \hat{\E}^m_{< c + 1}$.
On the other hand, as there are only $m - 1$ such positions in $[\tau_i, \tau_i + c - 1]$,
we have $(\rho, i) \models \neg {\E^{m}}_{\leq c - 1}$.
By the above, we have
\[
(\rho, i) \models \phi^4 = \bigvee_{1 \leq m \leq c} \big(\hat{\E}^m_{< c + 1} \land \neg {\E^{m}}_{\leq c - 1} \big) \,.
\]
%from $\langle s, \ell \rangle$ to $\langle s', \ell' \rangle$ is associated with a subformula $\phi_a \wedge \phi$ such that
%$a \in \Sigma^\A$ and $s' = \atransitions^\A(s, a)$;
\begin{figure}
\centering
	\scalebox{.75}{\begin{tikzpicture}[node distance = 2cm]
		\node[initial left ,state] (0) {};
		\node[state, right= 1.5cm of 0] (1) {};
		\node[state, right= 1.5cm of 1] (2) {};
		\node[state, right= 1.5cm of 2] (3) {};
		\node[state, right= 1.5cm of 3] (4) {};
		\node[state, accepting, right= 1.5cm of 4] (5) {};
		\path
		(0) edge[loopabove, ->] (0)
		(2) edge[loopabove, ->] (2)
		(4) edge[loopabove, ->] (4)
%		(1) edge[loopabove, ->] node[above] {$\A''_{(1, \infty)}$} (1)
		(0) edge[->] node[above] {$\C^s_{>1}$} (1)
		(1) edge[->] node[above] {$\C^s_{\leq 1}$} (2)
		(1) edge[->, bend right=30] node[below] {$\C^s_{>1} \land s \in F$} (3)
		(2) edge[->] node[above] {$\C^s_{>1}$} (3)
		(3) edge[->] node[above] {$\C^s_{\leq 1}$} (4)
		(3) edge[->, bend right=30] node[below] {$\C^s_{>1} \land s \in F \land \phi^s$} (5)
		(4) edge[->] node[above] {$\C^s_{>1} \land \phi^s$} (5)

		;
	\end{tikzpicture}}
  \caption{An illustration of $\E^3$ in the proof of Theorem~\ref{thm:expeq}.}
  \label{fig:example-b}
\end{figure}
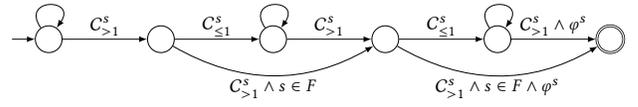

\begin{figure}
\centering
	\scalebox{.75}{\begin{tikzpicture}[node distance = 2cm]
		\node[initial left ,state] (0) {};
		\node[state, right= 1.5cm of 0] (1) {};
		\node[state, accepting, right= 1.5cm of 1] (2) {};
		\path
		(1) edge[loopabove, ->] node[above] {$\C^s_{\leq 1}$} (1)
%		(1) edge[loopabove, ->] node[above] {$\A''_{(1, \infty)}$} (1)
		(0) edge[->] (1)
		(1) edge[->] node[above] {$s \in F$} (2)

		;
	\end{tikzpicture}}
  \caption{An illustration of $\phi^s$ in the proof of Theorem~\ref{thm:expeq}.}
  \label{fig:example-b2}
\end{figure}

\item There is no event in $(\tau_i + c -1, \tau_i + c]$: We have 
\[
(\rho, i) \models \phi^5 = \neg \eventually_{(c-1, c]} \top \land \phi' \,.
\]
If $c = 1$ then $\phi'$ can simply be taken as $\A_{(0, 2)}$,
which is equivalent to $\A^3_{< 2}$ by the same argument as before.
If $c > 1$, then $\phi'$ can be taken as
\[
\bigvee_{1 \leq m \leq c - 1} \big(\hat{\E}^m_{< c + 1} \land \neg {\E^{m}}_{\leq c - 2} \big) \vee \phi''
\]
where $\phi''$ is
\[
\neg \eventually_{(c-2, c-1]} \top \wedge \bigvee_{1 \leq m \leq c - 2} \big(\hat{\E}^m_{< c + 1} \land \neg {\E^{m}}_{\leq c - 3} \big) \vee \phi''' \,.
\]
Intuitively, the former part of $\phi'$ is used to handle the case when
there is (at least) a event in $(\tau_i + c-2, \tau_i + c-1]$, and
the former part of $\phi''$ is for $(\tau_i + c-3, \tau_i + c-2]$,
and so on.
%If $c > 1$, then $\phi'$ is otherwise the same as the formula we are just describing, but based on events in $(\tau_i + c-2, \tau_i + c-1]$
%instead of $(\tau_i + c-1, \tau_i + c]$.
%For examples, in Case (iii) the formula $\phi^3$ becomes $\neg \A_{(c-2, c - 1]} \wedge \bigvee_{s \in S} \B^{s, \phi}_{(c-2, c - 1]}$
%with $\phi = \D^s_{(0, 2)}$;
%Case (v) becomes ``there is no event in $(c -2, c - 1]$'' and the formula $\phi^5$
%becomes $\neg \eventually_{(c-2, c-1]} \top \land \phi''$
%where $\phi''$ is based on events in  $(\tau_i + c-3, \tau_i + c-2]$.

\end{enumerate}
We omit the other direction as it is (more or less) straightforward.
The equivalent $\emitl_{0, \infty}$ formula is $\phi^1 \vee \phi^2 \vee \phi^3 \vee \phi^4 \vee \phi^5$.
\end{proof}

\section{From $\emitl_{0, \infty}$ to timed automata}\label{sec:emitl2ocata}

\paragraph{Embedding \emitl{} formulae into \ocata{s}}
We give a translation from a given \emitl{} formula $\phi$ over $\AP$ (which we assume to be in negative normal form)
into an \ocata{} $\A_\phi = \langle \Sigma_\AP, S, s_0, \atransitions, F \rangle$ %over $\Sigma_{\AP}$
such that $\sem{\A_\phi} = \sem{\phi}$.
While this mostly follows the lines of the translation for \mtl{} (and \mitl{}) in~\cite{OuaWor07, BriEst14},
it is worth noting that the 
resulting \ocata{} $\A_\phi$ is \emph{weak}~\cite{MulSao86,KupVar97}
but not necessarily \emph{very-weak}~\cite{Rohde97phd, GasOdd01} due to the presence of automata modalities.
%For simplicity, here we regard multiple 
%occurrences of the same subformula as distinct subformulae.
The set of locations $S$ of $\A_\phi$
contains (i) $s^\textit{init}$; 
(ii) all the locations of $\A$ for every subformula $\A_I(\phi_1, \dots, \phi_n)$;
(iii) all the locations of $\A$ for every subformula $\tilde{\A}_I(\phi_1, \dots, \phi_n)$. 
The initial location $s_0$ is $s^\textit{init}$, and the final locations $F$ are all the locations
of $\A$ for every subformula $\tilde{\A}_I(\phi_1, \dots, \phi_n)$. Finally, 
for each $\sigma \in \Sigma_{\AP}$, $\atransitions$ is defined inductively as follows (let $\A = \langle \Sigma^\A, S^\A, s_0^\A, \atransitions^\A, F^\A \rangle$ with $\Sigma^\A = \{1, \dots, n\}$):
\begin{itemize}
\item
  $\atransitions(s^\textit{init},\sigma)=x.\atransitions(\phi,\sigma)$,
  $\atransitions(\top,\sigma)=\top$, and
  $\atransitions(\bot,\sigma)=\bot$;
\item $\atransitions(p,\sigma) = \top$ if $p\in\sigma$,
  $\atransitions(p,\sigma)=\bot$ otherwise;
\item $\atransitions(\lnot p,\sigma) = \top$ if $p\notin \sigma$,
  $\atransitions(\lnot p,\sigma)=\bot$ otherwise;
\item
  $\atransitions(\phi_1\lor\phi_2,\sigma)=\atransitions(\phi_1,\sigma)\lor
  \atransitions(\phi_2,\sigma)$, and
  $\atransitions(\phi_1\land\phi_2,\sigma)=\atransitions(\phi_1,\sigma)\land \atransitions(\phi_2,\sigma)$;
\item
  $\atransitions(\A_I(\phi_1, \dots, \phi_n),\sigma)= x.\atransitions(s_0^\A, \sigma)$;
\item
  $\atransitions(s^\A,\sigma)= \bigvee_{a \in \Sigma^\A} \big( \atransitions(\phi_a , \sigma) \land \atransitions^\A[s_F^\A \leftarrow s^\A_F \lor x \in I](s^\A, a) \big)$ where $s^\A \in S^\A$ and $\atransitions^\A[s_F^\A \leftarrow s^\A_F \lor x \in I]$ is obtained from $\atransitions^\A$ by substituting every $s_F^\A \in F^\A$ with
$s_F^\A \lor x \in I$ for some subformula $\A_I(\phi_1, \dots, \phi_n)$;
\item
  $\atransitions(\tilde{\A}_I(\phi_1, \dots, \phi_n),\sigma)= x.\atransitions(s_0^\A, \sigma)$;
\item
  $\atransitions(s^\A,\sigma)= \bigwedge_{a \in \Sigma^\A} \big( \atransitions(\phi_a , \sigma) \lor \overline{\atransitions^\A}[s_F^\A \leftarrow s^\A_F \land x \notin I](s^\A, a) \big)$ where $s^\A \in S^\A$ and $\overline{\atransitions^\A}[s_F^\A \leftarrow s^\A_F \land x \notin I]$ is obtained from $\overline{\atransitions^\A}$ by substituting every $s_F^\A \in F^\A$ with
$s_F^\A \land x \notin I$ for some subformula $\tilde{\A}_I(\phi_1, \dots, \phi_n)$.
\end{itemize}

\begin{proposition}\label{prop:emitl2ocata}
Given an \emitl{} formula $\phi$ in negative normal form, \mbox{$\sem{\A_\phi} = \sem{\phi}$}.
\end{proposition}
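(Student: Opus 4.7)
The plan is a standard structural induction on $\phi$ in negative normal form. The base cases ($\top$, $\bot$, $p$, $\neg p$) and the Boolean cases ($\phi_1 \land \phi_2$, $\phi_1 \lor \phi_2$) are immediate from the definition of $\models_v$ and the inductive hypothesis, so the argument reduces to the two automata cases. The initial step introducing $s^{\textit{init}}$ is a bookkeeping reset of $x$ so that, when the induction is later applied at modality boundaries, the clock value used for the interval test reflects the time elapsed since the modality was entered.

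For the case $\phi = \A_I(\phi_1, \dots, \phi_n)$, I would show a tight correspondence between witnesses for the semantic clause and accepting finite branches of the sub-DAG of the OCATA run rooted at the copy where $\A_I$ is activated. Concretely, by the inductive hypothesis each $\phi_a$ is captured by its sub-OCATA, and the clause $\atransitions(s^\A, \sigma) = \bigvee_a\bigl(\atransitions(\phi_a,\sigma) \land \atransitions^\A[\,s_F^\A \leftarrow s_F^\A \lor x\in I\,](s^\A, a)\bigr)$ is simply an existential guess of the next letter $a_\ell$ together with a transition of $\A$ on $a_\ell$. Because $x$ was reset at entry and the target of any transition of $\A$ leading into a final location $s_F^\A$ is rewritten as $s_F^\A \lor x\in I$, an OCATA branch can terminate via the empty model exactly when $\A$ reaches $F^\A$ at a time $t$ with $t \in I$. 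Walking this equivalence in both directions — OCATA branch $\Leftrightarrow$ choice of $(a_i \dots a_j, j)$ satisfying the semantic clause — yields soundness and completeness for this case.

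For the dual case $\phi = \tilde{\A}_I(\phi_1, \dots, \phi_n)$, I would rely on the fact that the transition is obtained by swapping $\lor/\land$ and replacing $s_F^\A \lor x\in I$ with $s_F^\A \land x \notin I$. This is the precise pointwise dualisation of the previous clause: the conjunction over $a$ forces, for every possible letter, either $\phi_a$ to fail (blocking that choice of $\A$-input) or the dualised $\A$-simulation to continue; and the substitution $s_F^\A \land x \notin I$ becomes $\bot$ exactly when a final $\A$-location would be reached with $x\in I$, killing the bad branch. All $\tilde{\A}_I$-locations are placed in $F$, so any infinite path that remains inside the dual-simulation copies is accepting, which corresponds semantically to the absence of any witness for the positive modality. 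Appealing to the monotonicity of $\models_v$, one then argues that the existence of an accepting DAG of $\A_\phi$ is equivalent to the non-existence of a witness $(j, a_i \dots a_j)$ for $\A_I(\neg\phi_1,\dots,\neg\phi_n)$, i.e.\ to $(\rho,i) \models \tilde{\A}_I(\phi_1,\dots,\phi_n)$.

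The main technical obstacle I anticipate is managing the interaction between the two substitutions and the OCATA's DAG acceptance condition in the dual case: one has to show that every infinite path in an accepting run of $\A_\phi$ that stays in a $\tilde{\A}_I$-copy genuinely certifies that no finite prefix of $\A$'s simulation could have reached $F^\A$ with $x\in I$ against inputs consistent with the $\phi_a$'s, and conversely that any such semantic certificate can be packaged as a legal model at each step. This amounts to a careful, level-by-level bookkeeping argument—essentially a generalisation of the analogous argument for \mtl{} in~\cite{OuaWor07, BriEst14}—with the novelty that the ``acceptance = finite successful $\A$-run" obligation is now distributed across many branches of the DAG rather than carried by a single thread, which is exactly the reason $\A_\phi$ is weak but not very-weak.
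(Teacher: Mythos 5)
Your plan is sound in outline but takes a genuinely different route from the paper on the harder inclusion. The paper proves only $\sem{\A_\phi} \subseteq \sem{\phi}$ directly: it fixes an accepting run, considers its levels $L_i = \{(s,v) \mid (s,v,i) \in V\}$, and establishes two explicit invariants stating that $L_i \models_v s^\A$ yields a finite witness DAG for a location of an $\A_I$-copy (resp.\ a DAG certifying that no accepting $\A$-run hits $F^\A$ inside $I$ for a $\tilde{\A}_I$-copy), from which $L_i \models_0 \atransitions(\psi,\sigma_i) \Rightarrow (\rho,i)\models\psi$ follows by structural induction. The converse inclusion $\sem{\phi} \subseteq \sem{\A_\phi}$ is then obtained essentially for free: $\A_{\neg\phi} = (\A_\phi)^c$ up to renaming, $\A_\phi$ is tree-like so complementation by dualisation is sound, and applying the first direction to $\neg\phi$ and taking complements closes the argument. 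You instead propose to prove both directions of each modality case by hand, which is workable but forces you to carry out precisely the ``level-by-level bookkeeping'' you flag as the main obstacle: for completeness you must assemble a global run-DAG and verify acceptance over \emph{all} infinite paths (using weakness --- a path eventually stabilises in one copy and is accepting iff that copy belongs to a $\tilde{\A}_I$-subformula, so you must separately show that branches inside $\A_I$-copies terminate), and for soundness of the dual case the certificate is carried by the satisfiability of the conjunctive transitions (the $s_F^\A \land x \notin I$ substitution) at every level, not only by the infinite paths. The paper's duality shortcut buys exemption from the entire completeness half; your route is more self-contained and symmetric, but to be complete it needs the run-level invariant stated explicitly (the analogue of the paper's claims about $L_i \models_v s^\A$) rather than an informal branch-by-branch correspondence.
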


%\section{From \ocata{s} to \ta{s}}
 
We now focus on the case where $\phi$ is an $\emitl_{0, \infty}$ formula
and give a set of component \ta{s} whose product  %\ta{} $\B_\phi$ (in terms of its components---see below)
`implements' the corresponding \ocata{} $\A_\phi$.
As we will need some notions from~\cite{BriGee17},
we briefly recall them here to keep the paper self-contained.
%As before, we regard multiple occurrences of
%the same subformula of $\phi$ as distinct subformulae.
%In what follows, let $\phi$ be an \mitl{} formula over $\AP$
%in negative normal form and $\A_\phi$ be the \ocata{} obtained from $\phi$ with the translation described earlier.
%

\paragraph{Compositional removal of alternation in $\phi$}

Let $\Phi$ be the set of temporal subformulae (i.e.~whose outermost
operator is $\A_I$ or $\tilde{A}_I$) of $\phi$.  We introduce
a new atomic proposition $p_{\psi}$ for each $\psi\in \Phi$
(the \emph{trigger} for $\psi$)
and let $\AP_\Phi = \{ p_\psi \mid \psi \in \Phi\}$.
%The triggers $p_\psi \in \AP_\Phi$
%are used to facilitate the communication between the component \ta{s}
%$\C_\psi$ for each $\psi \in \Phi$, which we describe below.
%Intuitively, setting $p_\psi$ to $\top$ (`\emph{pulling}' the trigger $p_\psi$)
%at some point means that $\psi$ is required to hold at that point;
%$p_\psi$ being $\bot$, on the other hand, does not mean
%that $\psi$ must not hold---its satisfaction is simply
%irrelevant there.
For a timed word $\rho'$ over $\Sigma_{\AP \cup \AP_\Phi}$,
we denote by $\proj_\AP(\rho')$ the timed word obtained from $\rho'$
by hiding all $p \notin \AP$ (i.e.~$p \in \AP_\Phi$).
For a timed language $\Lang$ over $\AP \cup \AP_\Phi$
we write $\proj_\AP(\Lang) = \{ \proj_\AP(\rho') \mid \rho' \in \Lang \}$.
Let $\overline{\psi}$ be the formula obtained from an $\emitl_{0, \infty}$ formula $\psi$ (in negative normal form) by replacing all of its 
top-level temporal subformulae by their corresponding triggers, %$\xi$ by $p_\xi \in \triggers_\psi$.
i.e.~$\overline{\psi}$ is defined
inductively as follows (where $p\in \AP$):
\begin{itemize}
\item $\overline {\psi_1\land \psi_2} =\overline {\psi_1}\land \overline{\psi_2}$;

\item $\overline {\psi_1\lor \psi_2} =\overline {\psi_1}\lor \overline{\psi_2}$;
\item $\overline{\psi}= \psi\text{ when }\psi\text{ is }\top\text{ or } \bot\text{ or } p\text{ or } \neg p$;
\item $\overline{\psi}= p_\psi \text{ when }\psi\text{ is } \A_I(\phi_1, \dots, \phi_n) \text{ or } \tilde{\A}_I(\phi_1, \dots, \phi_n)$.
\end{itemize}
Note that $\overline{\psi}$ is simply a positive Boolean combination of atomic propositions.
In this way, we can turn the given $\emitl_{0, \infty}$ formula $\phi$ into
an equisatisfiable $\emitl_{0, \infty}$ formula $\phi'$ over $\AP \cup \AP_\Phi$: 
the conjunction of $\overline{\phi}$, 
\[
\bigwedge_{\{ \psi \in \Phi \mid \psi = \A_I(\phi_1, \dots, \phi_n) \}}\globally \big(p_\psi \Rightarrow \A_I(\overline{\phi_1}, \dots, \overline{\phi_n})\big) \,,
\]   
and the counterparts for $\{ \psi \in \Phi \mid \psi = \tilde{\A}_I(\phi_1, \dots, \phi_n) \}$.
Finally, we construct the component \ta{s} $\C^\textit{init}$ (which accepts $\sem{\overline{\phi}}$)
and $\C^\psi$ (which accepts, say, $\sem{\globally \big(p_\psi \Rightarrow \A_I(\overline{\phi_1}, \dots, \overline{\phi_n})\big)}$) for every $\psi \in \Phi$.
The timed language of $\phi'$ is accepted by the product
$\C^\textit{init}\times \prod_{\psi\in\Phi}\C^\psi$ and, in particular, $\proj_\AP(\sem{\C^{\textit{init}}\times \prod_{\psi\in\Phi}\C^\psi}) = \sem{\A_\phi}$.
Intuitively, $p_\psi$ being $\top$ (the trigger $p_\psi$ is `\emph{pulled}')
at some position means that the \ocata{} $\A_\phi$ spawns a copy (several copies)
of $\A$ where $\psi = \A_I$ ($\psi = \tilde{\A}_I$) at this position or,
equivalently, an obligation that $\A_I$ ($\tilde{\A}_I$) must hold
is imposed on this position.

\paragraph{Component \ta{s} for automata modalities}
As the construction of $\C^\textit{init}$ is trivial,
we only describe the component \ta{s} $\C^\psi$ for $\psi = \A_I(\phi_1, \dots, \phi_n)$ and $\psi = \tilde{\A}_I(\phi_1, \dots, \phi_n)$
where $I = [0, c]$ or $I = [c, \infty)$ for some $c \in \N_{\geq 0}$;
the other types of constraining intervals are handled similarly. %(such as $I = (c, \infty)$) can be handled almost identically. 
The crucial observation that allows us to bound the number of clocks needed
in $\C^\psi$ is that two or more obligations, provided that
their corresponding copies
of $\A$ are in the same location(s) at some point(s)
and $I$ is a lower or upper bound,
can be merged into a single one.
Instead of keeping track of the order of the values of its clocks,
$\C^\psi$ non-deterministically guesses how obligations
should be merged and put them into suitable sub-components accordingly.
To ensure that all obligations are satisfied, we use an extra
variable $\ell$ such that $\ell = 0$ when there is no obligation,
$\ell = 1$ when there is at least one pending obligation,
$\ell = 2$ when the pending obligations have just been
satisfied and a new obligation has just arrived, and finally
$\ell = 3$ when we have to wait the current obligations to be satisfied
(explained below).
In all the cases below we fix $\Sigma = \Sigma_{\AP \cup \AP_\Phi}$ and $|S^\A| = m$.
We write $S^\textit{src} \xrightarrow[\vee]{\sigma} S^\textit{tgt}$,
where $S^\textit{src}$ and $S^\textit{tgt}$ are two subsets of $S^\A$, iff
$S^\textit{tgt}$ is a minimal set such that
for each $s^{\A, 1} \in S^\textit{src}$, there is a transition
$s^{\A, 1} \xrightarrow{\overline{\phi_a}} s^{\A, 2}$ (where $a \in \{1, \dots, n\}$)
of $\A$
with $\sigma \models \overline{\phi_a}$ and $s^{\A, 2} \in S^\textit{tgt}$.
Similarly, we write $S^\textit{src} \xrightarrow[\wedge]{\sigma} S^\textit{tgt}$ iff
$S^\textit{tgt}$ is a minimal set such that
for each $s^{\A, 1} \in S^\textit{src}$ and each transition
$s^{\A, 1} \xrightarrow{\overline{\phi_a}} s^{\A, 2}$ (where $a \in \{1, \dots, n\}$) of $\A$, either $s^{\A, 2} \in S^\textit{tgt}$ or $\sigma \models \overline{\phi_a}$.

\paragraph{$\psi = \A_{\leq c}$}
Let $C^\psi = \langle \Sigma, S, s_0, X, \transitions, \F \rangle$ be defined as follows
(to simplify the presentation, in this case we assume that $\A$ only accepts words of length $\geq 2$):
\begin{itemize}
\item Each location $s \in S$ is of the form $\langle \ell_1, S_1, \dots, \ell_m, S_m \rangle$ where $\ell_j \in \{0, 1, 2\}$ and $S_j \subseteq S^\A$ for all $j \in \{1, \dots, m\}$; intuitively, $\langle \ell_j, S_j \rangle$ can be seen as a location
of the sub-component $C^\psi_j$; 
\item $s_0 = \langle 0, \emptyset, \dots, 0, \emptyset \rangle$;
\item $X = \{x_1, \dots, x_m\}$;
\item $\F = \{F_1, \dots, F_m\}$ where $F_j$ contains all locations with $\ell_j = 0$ or $\ell_j = 2$;
\item $\transitions$ is obtained by synchronising
the transitions
$\langle \ell, S \rangle \xrightarrow{\sigma, g, \lambda} \langle \ell', S' \rangle$
of individual sub-components (we omit the subscripts for brevity):
	\begin{itemize}
	\item $p_\psi \notin \sigma$; $\ell' = 0$, $\ell = 0$; $S' = \emptyset$, $S = \emptyset$; $g = \top$; $\lambda = \emptyset$.
	\item $p_\psi \notin \sigma$; $\ell' = 1$, $\ell \in \{1, 2\}$; $S \xrightarrow[\vee]{\sigma} S'$; $g = \top$; $\lambda = \emptyset$.
	\item $p_\psi \notin \sigma$; $\ell' = 0$, $\ell \in \{1, 2\}$; $S' = \emptyset$, $S = \{s^\A\}$, $s^\A_F \models \atransitions(s^\A, \sigma)$ for some $s^\A_F \in F^\A$; $g = x \leq c$; $\lambda = \emptyset$.
	\item $p_\psi \in \sigma$; $\ell' = 1$, $\ell = 0$; $\{s_0^\A\} \xrightarrow[\vee]{\sigma} S'$, $S = \emptyset$; $g = \top$; $\lambda = \{ x \}$. 
	\item $p_\psi \in \sigma$; $\ell' = 1$, $\ell \in \{1, 2\}$; $S'$ is the union of
some $S''$ such that $S \xrightarrow[\vee]{\sigma} S''$ and $S'''$ with $\{s_0^\A\} \xrightarrow[\vee]{\sigma} S'''$; $g = \top$; $\lambda = \emptyset$. 
	\item $p_\psi \in \sigma$; $\ell' = 2$, $\ell \in \{1, 2\}$; $\{s_0^\A\} \xrightarrow[\vee]{\sigma} S'$, $S = \{s^\A\}$, $s^\A_F \models \atransitions(s^\A, \sigma)$ for some $s^\A_F \in F^\A$; $g = x \leq c$; $\lambda = \{x \}$. 
	\end{itemize}
If $p_\psi \in \sigma$, then exactly
one of the sub-components takes a `$p_\psi \in \sigma$' transition
while the others proceed as if $p_\psi \notin \sigma$.
\end{itemize}

\begin{proposition}\label{prop:firstcase}
$\sem{\C^\psi} = \sem{\globally \big(p_\psi \Rightarrow \A_{\leq c}(\overline{\phi_1}, \dots, \overline{\phi_n})\big)}$.
\end{proposition}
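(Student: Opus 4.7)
The plan is to prove the two language inclusions separately, linking in each direction the pending-obligations view of the formula semantics with the dynamic state of the sub-components in $\C^\psi$.

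For $\sem{\C^\psi} \subseteq \sem{\globally(p_\psi \Rightarrow \A_{\leq c}(\overline{\phi_1}, \dots, \overline{\phi_n}))}$ I would fix an accepting run of $\C^\psi$ on $\rho$ and verify, for every position $i$ with $p_\psi \in \sigma_i$, the existence of a witness position $k \geq i$ with $\tau_k - \tau_i \leq c$ and an accepting $\A$-run on the relevant finite word. The key invariant, maintained inductively along the run by case-analysing the six transition cases, is that in each sub-component $j$ the set $S_j$ at any instant equals exactly the set of $\A$-states currently occupied by the unique deterministic runs of the pending obligations assigned to $j$ (exploiting the assumed determinism of $\A$ and the fact that at most one $\overline{\phi_a}$ holds per position). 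Every $p_\psi$-event enrolls a new obligation into some sub-component via case 4, 5 or 6, setting $\ell_j \in \{1,2\}$, and the generalised B\"uchi condition forces $\ell_j$ back into $\{0,2\}$ infinitely often. Since the only exits from $\ell_j = 1$ are the discharge cases 3 and 6 with guard $x_j \leq c$, and $x_j$ is reset precisely when a sub-component transitions from idle to holding a fresh oldest obligation, every obligation is discharged within $c$ time units of its own start; the discharge precondition $S_j = \{s^\A\}$ together with a transition from $s^\A$ into $F^\A$ then yields the required accepting $\A$-run for each pending obligation in $j$ simultaneously.

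For the converse I would construct an accepting run of $\C^\psi$ on a satisfying $\rho$ via a scheduling that uses the bound $|S^\A| = m$ to limit resource usage. Maintain the invariant that every active sub-component has $|S_j| = 1$ and that distinct active sub-components carry distinct singletons; at most $m$ sub-components are therefore needed at any instant. When a new obligation arrives at position $i$, place it into the sub-component whose post-$\sigma_i$ state coincides with $\atransitions^\A(s_0^\A, a_i)$, using a fresh sub-component if none matches: since the $m$ slots cover all $m$ possible post-transition states, such a placement always exists. Each sub-component is discharged at the earliest witness position $k_{i_0}$ supplied by the formula for its oldest obligation $i_0$ (so $\tau_{k_{i_0}} - \tau_{i_0} \leq c$); at that instant every obligation in the sub-component shares $i_0$'s $\A$-state, so the transition to $F^\A$ witnesses them all at once, and a trivial calculation gives $\tau_{k_{i_0}} - \tau_i \leq c$ for every newer $i$ in the sub-component, so the guard $x_j \leq c$ is respected.

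The main obstacle is the converse direction, namely checking that the abstract scheduling actually unfolds as a concrete run of $\C^\psi$ through its six transition cases. This requires handling the simultaneous arrival-and-discharge scenario via case 6 (as opposed to sequencing cases 3 and 5), correctly reusing a discharged sub-component on the next step, and arguing that no sub-component is ever stuck in $\ell_j = 1$ forever---this last point follows from the uniform bound that every obligation is discharged within $c$ time units, so $\ell_j$ returns to $\{0,2\}$ between successive enrolments and the generalised B\"uchi condition is satisfied. Combined with the tracking argument for the forward direction, these yield the claimed language equality.
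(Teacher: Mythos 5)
Your overall architecture --- two inclusions, sub-components as containers of merged obligations, a clock measuring the age of the \emph{oldest} obligation in each sub-component, and downward-closedness of $[0,c]$ to let younger obligations ride on the oldest one's witness --- matches the paper's, and your first direction ($\sem{\C^\psi} \subseteq \sem{\globally(p_\psi \Rightarrow \A_{\leq c})}$) is essentially the paper's one-line argument spelled out. But there are two genuine problems. First, you repeatedly invoke the determinism of $\A$ and the ``at most one $\overline{\phi_a}$ holds per position'' assumption; those are assumed only for the expressiveness proofs, not for this construction. Here $\A$ is an arbitrary \nfa{}, and $S \xrightarrow[\vee]{\sigma} S'$ non-deterministically \emph{guesses} one successor per tracked location (a minimal such $S'$), so there is no ``unique deterministic run'' for $S_j$ to equal. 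Both of your invariants must be restated in terms of a run of $\A$ \emph{chosen} per obligation: for soundness, that every pending obligation assigned to $j$ has at least one partial run ending in some location of $S_j$ (preserved by $\xrightarrow[\vee]{\sigma}$ since every source location keeps a successor); for completeness, the run supplied by the satisfaction witness.

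Second, and more seriously, your completeness invariant --- every active sub-component holds a distinct singleton, hence $m$ slots suffice and a matching slot always exists --- is not maintainable. You merge only at arrival time, but two obligations sitting in \emph{different} sub-components at distinct locations can move to the \emph{same} location one step later, and nothing in the construction merges two sub-components mid-flight. After that your $m$ busy slots cover at most $m-1$ locations, so a new obligation landing on an uncovered location has neither a matching slot nor a free one; iterating this exhausts the sub-components. The construction anticipates exactly this via the non-singleton sets $S_j$ and the union transition for $p_\psi \in \sigma$, $\ell \in \{1,2\}$: a fresh obligation may be co-located with obligations currently at \emph{other} locations, provided they all converge to a singleton by discharge time --- a mechanism your scheduling never uses. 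The paper instead obtains the bound $m$ by taking an accepting run of the \ocata{} $\A_{\psi'}$ and, at \emph{every} level, replacing all subtrees rooted at nodes sharing a location by the subtree of the node with maximal clock value (legal because $[0,c]$ is downward closed), so each level retains at most one node per location; the sub-component assignment is read off this pruned DAG. You need either that level-by-level merging or a scheduling that exploits non-singleton $S_j$; as written, your second direction does not go through.
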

\begin{proof}[Proof sketch.]
Let $\psi' =  \globally \big(p_\psi \Rightarrow \A_{\leq c}(\overline{\phi_1}, \dots, \overline{\phi_n})\big)$
and $\A_{\psi'}$ be the equivalent \ocata{} obtained via Proposition~\ref{prop:emitl2ocata}.
If a timed word $\rho = (\sigma_i, \tau_i)_{i \geq 1}$ satisfies $\psi'$, there must be an accepting run
$G = \langle V, \rightarrow \rangle$ of $\A_{\psi'}$ on $\rho$;
in particular, a copy of $\A$ is spawned whenever $p_\psi$ holds.
Now consider each `level' 
$L_i = \{ (s, v) \mid (s, v, i) \in V \}$
 of $G$ in the increasing order of $i$.
If $|\{ (s^\A, v) \mid (s^\A, v) \in L_i \}| \leq 1$ for every $s^\A \in S^\A$, 
in $\C^\psi$ we simply put each corresponding obligation into an unused sub-component (with $\ell = 0$ and $S = \emptyset$)
when it arrives, i.e.~$p_\psi$ holds.
If $|\{ (s^\A, v) \mid (s^\A, v) \in L_i \}| > 1$ for some $s^\A \in S^\A$, since
the constraining interval $[0, c]$ is \emph{downward closed}, the DAG
obtained from $G$ by replacing all the subtrees rooted at nodes $(s^\A, v, i)$ 
with $(s^\A, v^\textit{max}, i)$,
where $v^\textit{max} = \max \{ v \mid (s^\A, v) \in L_i \}$,
is still an accepting run of $\A_{\psi'}$; in $\C^\psi$, this amounts to putting
the obligations that correspond to nodes $(s^\A, v, i)$ into the sub-component
that holds the (oldest) obligation that corresponds to $(s^\A, v^\textit{max}, i)$.
We do the same for all such $s^\A$, obtain $G'$, and start over from $i + 1$.
In this way, we can readily construct an accepting run of $\C^\psi$ on $\rho$.
The other direction obviously holds as each sub-component $\C^\psi_j$
does not reset its associated clock $x_j$ when $p_\psi \in \sigma$ and $\ell_j \in \{1, 2\}$,
unless the (only remaining) obligation in $S_j$ is fulfilled right away.
In other words, $\C^\psi_j$ adds an obligation that is at least as strong to $S_j$
without weakening the existing ones in $S_j$.
\end{proof}

\paragraph{$\psi = \A_{\geq c}$}
Let $\C^\psi = \langle \Sigma, S, s_0, X, \transitions, \F \rangle$ be defined as follows:
\begin{itemize}
\item Each location $s \in S$ is of the form $\langle \ell_1, S_1, T_1 \dots, \ell_m, S_m, T_m \rangle$ where $\ell_j \in \{0, 1, 2, 3\}$ and $S_j, T_j \subseteq S^\A$ for all $j \in \{1, \dots, m\}$; intuitively, $\langle \ell_j, S_j, T_j \rangle$ can be seen as a location
of the sub-component $C^\psi_j$;
\item $s_0 = \langle 0, \emptyset, \emptyset, \dots, 0, \emptyset, \emptyset \rangle$;
\item $X = \{x_1, \dots, x_m\}$;
\item $\F = \{F_1, \dots, F_m\}$ where $F_j$ contains all locations with $\ell_j = 0$ or $\ell_j = 2$;
\item $\transitions$ is obtained by synchronising (in the same way as before) 
$\langle \ell, S, T \rangle \xrightarrow{\sigma, g, \lambda} \langle \ell', S', T' \rangle$
of individual sub-components:
	\begin{itemize}
	\item $p_\psi \notin \sigma$; $\ell' = 0$, $\ell = 0$; $S' = \emptyset$, $S = \emptyset$, $T' = \emptyset$, $T = \emptyset$; $g = \top$; $\lambda = \emptyset$.
	\item $p_\psi \notin \sigma$; $\ell' = 1$, $\ell \in \{1, 2\}$; $S \xrightarrow[\vee]{\sigma} S'$, $T' = \emptyset$, $T = \emptyset$; $g = \top$; $\lambda = \emptyset$.
	\item $p_\psi \notin \sigma$; $\ell' = 3$, $\ell = 3$; $S \xrightarrow[\vee]{\sigma} S'$, $T \xrightarrow[\vee]{\sigma} T'$; $g = \top$; $\lambda = \emptyset$.
	\item $p_\psi \notin \sigma$; $\ell' = 0$, $\ell \in \{1, 2\}$; $S' = \emptyset$, $S = \{s^\A\}$, $s^\A_F \models \atransitions(s^\A, \sigma)$ for some $s^\A_F \in F^\A$, $T' = \emptyset$, $T = \emptyset$; $g = x \geq c$; $\lambda = \emptyset$.
	\item $p_\psi \notin \sigma$; $\ell' = 2$, $\ell = 3$; $T \xrightarrow[\vee]{\sigma} S'$, $S = \{s^\A\}$, $s^\A_F \models \atransitions(s^\A, \sigma)$ for some $s^\A_F \in F^\A$, $T' = \emptyset$; $g = x \geq c$; $\lambda = \{ x \}$.
	\item $p_\psi \in \sigma$; $\ell' = 1$, $\ell = 0$; $\{s_0^\A\} \xrightarrow[\vee]{\sigma} S'$, $S = \emptyset$, $T' = \emptyset$, $T = \emptyset$; $g = \top$; $\lambda = \{ x \}$. 
	\item $p_\psi \in \sigma$; $\ell' = 1$, $\ell \in \{1, 2\}$; $S'$ is the union of
some $S''$ such that $S \xrightarrow[\vee]{\sigma} S''$ and $S'''$ with $\{s_0^\A\} \xrightarrow[\vee]{\sigma} S'''$, $T' = \emptyset$, $T = \emptyset$; $g = \top$; $\lambda = \{ x \}$. 
	\item $p_\psi \in \sigma$; $\ell' = 3$, $\ell = 1$; $S \xrightarrow[\vee]{\sigma} S'$, $\{s_0^\A\} \xrightarrow[\vee]{\sigma} T'$, $T = \emptyset$; $g = \top$; $\lambda = \emptyset$. 
	\item $p_\psi \in \sigma$; $\ell' = 3$, $\ell = 3$; $S \xrightarrow[\vee]{\sigma} S'$, $T'$ is the union of some $T''$
such that $T \xrightarrow[\vee]{\sigma} T''$ and $T'''$ with $\{s_0^\A\} \xrightarrow[\vee]{\sigma} T'''$; $g = \top$; $\lambda = \emptyset$. 
	\item $p_\psi \in \sigma$; $\ell' = 2$, $\ell \in \{1, 2\}$; $\{s_0^\A\} \xrightarrow[\vee]{\sigma} S'$, $S = \{s^\A\}$, $s^\A_F \models \atransitions(s^\A, \sigma)$ for some $s^\A_F \in F^\A$, $T' = \emptyset$, $T = \emptyset$; $g = x \geq c$; $\lambda = \{x \}$. 
	\item $p_\psi \in \sigma$; $\ell' = 2$, $\ell = 3$; $S'$ is the union of some $S''$ such that $T \xrightarrow[\vee]{\sigma} S''$ and $S'''$ with $\{s_0^\A\} \xrightarrow[\vee]{\sigma} S'''$, $S = \{s^\A\}$, $s^\A_F \models \atransitions(s^\A, \sigma)$ for some $s^\A_F \in F^\A$, $T' = \emptyset$; $g = x \geq c$; $\lambda = \{ x \}$. 
	\end{itemize}
\end{itemize}
\begin{proposition}
$\sem{\C^\psi} = \sem{\globally \big(p_\psi \Rightarrow \A_{\geq c}(\overline{\phi_1}, \dots, \overline{\phi_n})\big)}$.
\end{proposition}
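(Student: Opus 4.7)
The plan is to mirror the proof of Proposition~\ref{prop:firstcase}, but with a dual merging strategy reflecting the upward-closed nature of the constraining interval $[c,\infty)$. Let $\psi'=\globally\bigl(p_\psi \Rightarrow \A_{\geq c}(\overline{\phi_1},\dots,\overline{\phi_n})\bigr)$ and obtain the equivalent \ocata{} $\A_{\psi'}$ via Proposition~\ref{prop:emitl2ocata}. The key observation replacing the one from the $\leq c$ case is: if two active copies of $\A$ are simultaneously in the same location $s^\A$ with clock values $v_1 \le v_2$, they may be merged by keeping the \emph{smaller} value $v_1$ (the newer obligation), since any later satisfaction point for the newer one is a fortiori a satisfaction point for the older one. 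The added complication is that new triggers may arrive before the clock of the currently tracked chain reaches $c$, and these cannot be merged into $S$ without forgetting the deadline of the chain already in place; this is precisely what the second slot $T$ and the auxiliary value $\ell = 3$ are designed to record.

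For the direction $\sem{\psi'} \subseteq \sem{\C^\psi}$, take an accepting run $G$ of $\A_{\psi'}$ on $\rho$ and process its levels $L_i$ in increasing order of $i$. Whenever several vertices at level $i$ share the same location $s^\A \in S^\A$, replace the subtrees rooted at all of them by a copy of the subtree rooted at $(s^\A, v^{\min}, i)$, where $v^{\min} = \min\{v \mid (s^\A,v) \in L_i\}$; the resulting DAG is still an accepting run of $\A_{\psi'}$ by upward closure of $[c, \infty)$. Assign each surviving chain of merged obligations to a sub-component $\C^\psi_j$, reading its local trajectory as follows: $\ell_j = 0$ while idle, $\ell_j = 1$ while a single chain (stored in $S_j$) is pending, $\ell_j = 3$ once a new trigger arrives before the old chain has fulfilled its lower bound (with $T_j$ accumulating the newly spawned obligations), and $\ell_j = 2$ at the instant the old chain reaches a final location with $x_j \geq c$, at which point $T_j$ is promoted to $S_j$ and $x_j$ is reset to track the new chain. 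The non-deterministic choices $S' = S'' \cup S'''$ and $T' = T'' \cup T'''$ in the transitions allow a freshly triggered obligation to be absorbed into the chain already present, which corresponds to the very merging step performed on $G$.

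For the converse, any accepting run of $\C^\psi$ on $\rho$ can be unfolded into a run of $\A_{\psi'}$ by replaying each sub-component as one or more copies of $\A$; the $\vee$-transitions $S \xrightarrow{\sigma}_\vee S'$ coincide with the choices available to the \ocata{} transition function, and the clock resets performed on merge steps only \emph{postpone} satisfaction, which remains admissible by upward closure. The generalised B\"uchi condition requiring $\ell_j \in \{0, 2\}$ infinitely often, together with the non-Zeno hypothesis, forbids any sub-component from being stuck in $\ell_j \in \{1, 3\}$ forever, so every obligation produced by a $p_\psi$-event is eventually discharged at a time at which the guard $x_j \geq c$ holds and a final state of $\A$ is reached, witnessing $\A_{\geq c}(\overline{\phi_1},\dots,\overline{\phi_n})$.

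The main obstacle lies in the bookkeeping around $\ell = 3$: one has to verify case by case that each of the ten transition schemas correctly implements a legal merging move on the DAG, in particular that the transitions into and out of $\ell = 3$ preserve the invariant ``$S$ is the set of locations of the chain whose clock value is $x$, and $T$ is the set of locations of the chain that will be promoted to $S$ (with $x$ reset) as soon as the current $S$ reaches a final state past $c$.'' Once this invariant and its restoration under each transition are established by a straightforward induction on position, both inclusions follow.
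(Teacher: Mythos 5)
Your proposal follows the paper's own proof sketch essentially step for step: the same merging rule (collapse copies in a common location onto the minimal clock value, justified by upward closure of $[c,\infty)$, realised in $\C^\psi$ by resetting $x_j$ on merge), the same use of the $T_j$-slot and $\ell_j = 3$ to stop the resets from postponing discharge of $S_j$ indefinitely, and the same converse argument that merging only strengthens obligations while the generalised B\"uchi condition forces eventual discharge. It is correct and not a genuinely different route, only somewhat more explicit about the invariant than the paper's sketch.
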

\begin{proof}[Proof sketch.]
Similar to the proof of Proposition~\ref{prop:firstcase}, but since
$[c, \infty)$ is \emph{upward closed}, we 
replace all the subtrees rooted at nodes $(s^\A, v, i)$ 
with $(s^\A, v^\textit{min}, i)$,
where $v^\textit{min} = \min \{ v \mid (s^\A, v) \in L_i \}$;
in $\C^\psi$, we still put the obligations that correspond to nodes $(s^\A, v, i)$ into the sub-component
that holds the (oldest) obligation that corresponds to $(s^\A, v^\textit{max}, i)$.
There is, however, a potential issue: since we reset $x_j$ whenever the trigger $p_\psi$ is pulled
and $\C^\psi_j$ is chosen, 
it might be the case that $x_j$ never reaches $c$, i.e.~the satisfaction of the obligations in $S_j$ are delayed indefinitely.
Following~\cite{BriGee17}, we solve this by locations with $\ell_j = 3$ such that, when entered, 
we stop resetting $x_j$ and put the new obligations into $T_j$ instead;
when the obligations in $S_j$ are fulfilled, we move the obligations in $T_j$ to $S_j$
and reset $x_j$.
The other direction obviously holds as each sub-component $\C^\psi_j$
resets $x_j$ when $p_\psi \in \sigma$ and $\ell_j \in \{1, 2\}$,
unless it goes from $\ell_j = 1$ to $\ell_j = 3$.
In other words, $\C^\psi_j$ adds the new obligation to $S_j$
while strengthening the existing ones in $S_j$.
\end{proof}

\paragraph{$\psi = \tilde{\A}_{\leq c}$}
Let $\C^\psi = \langle \Sigma, S, s_0, X, \transitions, \F \rangle$ be defined as follows:
\begin{itemize}
\item Each location $s \in S$ is of the form $\langle S_1, \dots, S_{m+1} \rangle$ where $S_j \subseteq S^\A$
		for all $j \in \{1, \dots, m+1\}$; intuitively, $S_j$ can be seen as a location of the sub-component $C^\psi_j$;
\item $s_0 = \langle  \emptyset, \dots, \emptyset \rangle$;
\item $X = \{x_1, \dots, x_{m+1}\}$;
\item $\F = \emptyset$, i.e.~any run is accepting;
\item $\transitions$ is obtained by synchronising (in the same way as before) 
$S \xrightarrow{\sigma, g, \lambda} S'$
of individual sub-components:
	\begin{itemize}
	\item $p_\psi \notin \sigma$; $S' = \emptyset$, $S = \emptyset$; $g = \top$; $\lambda = \emptyset$.

	\item $p_\psi \notin \sigma$; $S \xrightarrow[\wedge]{\sigma} S'$, $S' \cap F^\A = \emptyset$; $g = x \leq c$; $\lambda = \emptyset$.

	\item $p_\psi \notin \sigma$; $S' = \emptyset$; $g = x > c$; $\lambda = \emptyset$.

	\item $p_\psi \in \sigma$; $\{s_0^\A\} \xrightarrow[\wedge]{\sigma} S'$, $S' \cap F^\A = \emptyset$, $S = \emptyset$; $g = \top$; $\lambda = \{ x \}$. 

	\item $p_\psi \in \sigma$; $S'$ is the union of
some $S''$ such that $S \xrightarrow[\wedge]{\sigma} S''$ and $S'''$ with $\{s_0^\A\} \xrightarrow[\wedge]{\sigma} S'''$, $S' \cap F^\A = \emptyset$; $g = x \leq c$; $\lambda = \{ x \}$. 

	\item $p_\psi \in \sigma$; $\{s_0^\A\} \xrightarrow[\wedge]{\sigma} S'$, $S' \cap F^\A = \emptyset$; $g = x > c$; $\lambda = \{x \}$. 

	\end{itemize}
\end{itemize}

\begin{proposition}\label{prop:thirdcase}
$\sem{\C^\psi} = \sem{\globally \big(p_\psi \Rightarrow \tilde{\A}_{\leq c}(\overline{\phi_1}, \dots, \overline{\phi_n})\big)}$.
\end{proposition}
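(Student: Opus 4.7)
The plan is to mimic the proof of Proposition~\ref{prop:firstcase}, relating runs of $\C^\psi$ to those of the equivalent \ocata{} $\A_{\psi'}$ obtained from Proposition~\ref{prop:emitl2ocata}, where $\psi' = \globally\bigl(p_\psi \Rightarrow \tilde{\A}_{\leq c}(\overline{\phi_1}, \dots, \overline{\phi_n})\bigr)$. The distinguishing feature is dualisation: each sub-component $\C^\psi_j$ tracks a subset $S_j \subseteq S^\A$ representing the currently live locations of a conjunctive run of $\A$, evolving under $\xrightarrow[\wedge]{\sigma}$ so that every surviving branch either continues to a successor in $S_j'$ or is killed by $\sigma \models \overline{\phi_a}$. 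Because $\F = \emptyset$, acceptance of $\C^\psi$ coincides with non-blocking, so both inclusions reduce to operational reasoning about when transitions can fire.

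For $\sem{\C^\psi} \subseteq \sem{\psi'}$ I would argue directly from the transition rules: whenever $p_\psi \in \sigma_i$, the chosen sub-component resets its clock $x$ and seeds itself with the conjunctive image of $\{s_0^\A\}$ under $\xrightarrow[\wedge]{\sigma_i}$, and all subsequent transitions with $x \leq c$ require $S' \cap F^\A = \emptyset$. By the definition of $\xrightarrow[\wedge]{\sigma}$, any branch that could reach a final location of $\A$ within $c$ time units must have been killed by some $\overline{\phi_a}$ holding along the way, which is precisely the satisfaction condition of $\tilde{\A}_{\leq c}(\overline{\phi_1}, \dots, \overline{\phi_n})$ at position $i$.

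For $\sem{\psi'} \subseteq \sem{\C^\psi}$ I would build a concrete run of $\C^\psi$ on $\rho$ alongside a fixed accepting run of $\A_{\psi'}$. Each pulled trigger is placed into an unused sub-component (with $S_j = \emptyset$) whenever one is available; otherwise the new obligation is merged into an existing sub-component by taking the union of universal sets and resetting that sub-component's clock. Soundness of the merge follows from the fact that taking the union strengthens the conjunction of active constraints, which $\rho$ already satisfies by assumption, and that the clock reset enforces a strictly longer safety window than any individual alive obligation requires.

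The main obstacle is justifying the $m + 1$ bound on the number of sub-components. Unlike the existential case, where each obligation is associated with a single $\A$-location and pigeonhole on $|S^\A| = m$ is immediate, universal obligations evolve as arbitrary subsets of $S^\A$ and are coalesced by union rather than by replacement. I would establish the bound by exploiting the fact that $[0, c]$ is downward closed together with the $\subseteq$-monotonicity of $\xrightarrow[\wedge]{\sigma}$: at any time, the simultaneously alive universal sets can be organised so that no more than $m + 1$ remain pairwise incomparable in a way that would obstruct further merging, mirroring the role played by $v^\textit{max}$ in Proposition~\ref{prop:firstcase} but lifted to the lattice of subsets of $S^\A$. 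Once this combinatorial step is settled, the remaining work is a mechanical case analysis over the six transition rules, verifying that each correctly simulates a step of $\A_{\psi'}$.
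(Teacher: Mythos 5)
Your overall architecture matches the paper's: both inclusions are argued by relating runs of $\C^\psi$ to the \ocata{} $\A_{\psi'}$, the easy one from the guards and the $S' \cap F^\A = \emptyset$ conditions, the hard one by building a run level by level and merging obligations. The gap is in the step you yourself flag as the main obstacle. Reasoning about ``pairwise incomparable universal sets'' in the lattice of subsets of $S^\A$ cannot yield the bound $m+1$: an antichain in $2^{S^\A}$ can have exponentially many elements, so incomparability of the sets $S_j$ is not what limits the number of sub-components. The paper's argument is per \emph{location}, not per subset: since $[0,c]$ is downward closed, for every $s^\A \in S^\A$ all copies of $\A$ simultaneously pending at $s^\A$ are dominated by the one with the \emph{minimal} clock value $v^{\textit{min}}$ --- the smallest clock has the longest remaining safety window, hence the strongest obligation; note this is dual to the $v^{\textit{max}}$ used in Proposition~\ref{prop:firstcase}. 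Collapsing onto these representatives leaves at most $m$ live nodes per level of the \ocata{} run, so when a trigger arrives and all $m+1$ sub-components are occupied, at least one sub-component $C^\psi_j$ is redundant (each of its obligations is implied by some other sub-component), and it can be consolidated with the sub-component $C^\psi_k$ realising the minimal non-negative clock difference $x_j - x_k$, freeing a slot for the incoming obligation.

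Relatedly, your soundness claim for the merge --- ``taking the union strengthens the conjunction of active constraints, which $\rho$ already satisfies by assumption'' --- does not go through for arbitrary merges. Resetting the clock of an occupied sub-component extends the safety window of its \emph{existing} obligations, i.e., it imposes a strictly stronger requirement than anything $\rho$ was assumed to satisfy, so an unrestricted merge can block the constructed run (the $S' \cap F^\A = \emptyset$ condition may become unsatisfiable before $x > c$). Strengthening is the right argument for the converse inclusion $\sem{\C^\psi} \subseteq \sem{\psi'}$, but for $\sem{\psi'} \subseteq \sem{\C^\psi}$ the merge target must be chosen so that every obligation recorded after the merge is still implied by the canonical accepting run of $\A_{\psi'}$ --- which is exactly what the redundancy-plus-minimal-clock-gap selection above guarantees. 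Once the antichain argument is replaced by the per-location $v^{\textit{min}}$ collapse and the merges are restricted accordingly, the remaining case analysis over the six transition rules proceeds as you describe.
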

\begin{proof}[Proof sketch.]
Let $\psi' =  \globally \big(p_\psi \Rightarrow \tilde{\A}_{\leq c}(\overline{\phi_1}, \dots, \overline{\phi_n})\big)$
and $\A_{\psi'}$ be the equivalent \ocata{} obtained via Proposition~\ref{prop:emitl2ocata}.
Consider each level $L_i = \{ (s, v) \mid (s, v, i) \in V \}$
 of an accepting run $G = \langle V, \rightarrow \rangle$ of $\A_{\psi'}$ on $\rho = (\sigma_i, \tau_i)_{i \geq 1}$
in the increasing order of $i$. In $\C^\psi$, whenever the trigger $p_\psi$ is pulled, 
we attempt to put the corresponding obligation into an unused sub-component (with $S = \emptyset$)
or a sub-component that can be cleared (with $x > c$);
if this is not possible, since for every $s^\A \in S^\A$ all the subtrees rooted at nodes
$(s^\A, v, i)$ can be replaced with the subtree rooted at 
$(s^\A, v^\textit{min}, i)$ where $v^\textit{min} = \min \{ v \mid (s^\A, v) \in L_i \}$,
at least one sub-component $C^\psi_j$ becomes redundant, i.e.~all of its obligations are implied by
the other sub-components $C^\psi_k$, $k \neq j$. A consequence
is that the obligations in the sub-component $C^\psi_k$ with the minimal non-negative
value of $x_j - x_k$ can be merged with the obligations in $C^\psi_j$, freeing up a sub-component for
the current incoming obligation.
This can be repeated to construct an accepting run of $\C^\psi$ on $\rho$.
The other direction holds as each $C^\psi_j$ adds the new obligation to $S_j$
while strengthening the existing obligations in $S_j$.
\end{proof}

\paragraph{$\psi = \tilde{\A}_{\geq c}$}
Let $\C^\psi = \langle \Sigma, S, s_0, X, \transitions, \F \rangle$ be defined as follows
(for simplicity, assume that $c > 0$):
\begin{itemize}
\item Each location $s \in S$ is of the form $\langle S_1, T_1 \dots, S_{m+1}, T_{m+1} \rangle$ where $S_j, T_j \subseteq S^\A$ for all $j \in \{1, \dots, m + 1\}$; intuitively, $\langle S_j, T_j \rangle$ can be seen as a location
of the sub-component $C^\psi_j$;
\item $s_0 = \langle \emptyset, \emptyset, \dots, \emptyset, \emptyset \rangle$;
\item $X = \{x_1, \dots, x_{m+1}\}$;
\item $\F = \emptyset$, i.e.~any run is accepting;
\item $\transitions$ is obtained by synchronising (in the same way as before) 
$\langle \ell, S, T \rangle \xrightarrow{\sigma, g, \lambda} \langle \ell', S', T' \rangle$
of individual sub-components:
	\begin{itemize}
	\item $p_\psi \notin \sigma$; $S' = \emptyset$, $S = \emptyset$, $T' = \emptyset$, $T = \emptyset$; $g = \top$; $\lambda = \emptyset$.

	\item $p_\psi \notin \sigma$; $S' = \emptyset$, $S = \emptyset$, $T \xrightarrow[\wedge]{\sigma} T'$, $T' \cap F^\A = \emptyset$; $g = \top$; $\lambda = \emptyset$.

	\item $p_\psi \notin \sigma$; $S \xrightarrow[\wedge]{\sigma} S'$, $T' = \emptyset$, $T = \emptyset$; $g = x < c$; $\lambda = \emptyset$.

	\item $p_\psi \notin \sigma$; $S \xrightarrow[\wedge]{\sigma} S'$, $T \xrightarrow[\wedge]{\sigma} T'$, $T' \cap F^\A = \emptyset$; $g = x < c$; $\lambda = \emptyset$.

	\item $p_\psi \notin \sigma$; $S' = \emptyset$, $T'$ is the union of some $T''$ such that $S \xrightarrow[\wedge]{\sigma} T''$
and $T'''$ with $T \xrightarrow[\wedge]{\sigma} T'''$, $T' \cap F^\A = \emptyset$; $g = x \geq c$; $\lambda = \emptyset$.

	\item $p_\psi \in \sigma$; $\{s_0^\A\} \xrightarrow[\wedge]{\sigma} S'$, $S = \emptyset$, $T' = \emptyset$, $T = \emptyset$; $g = \top$; $\lambda = \{ x \}$. 

	\item $p_\psi \in \sigma$; $\{s_0^\A\} \xrightarrow[\wedge]{\sigma} S'$, $S = \emptyset$, $T \xrightarrow[\wedge]{\sigma} T'$, $T' \cap F^\A = \emptyset$; $g = \top$; $\lambda = \{ x \}$. 

	\item $p_\psi \in \sigma$; $S'$ is the union of
some $S''$ such that $S \xrightarrow[\wedge]{\sigma} S''$ and $S'''$ with $\{s_0^\A\} \xrightarrow[\wedge]{\sigma} S'''$, $T' = \emptyset$, $T = \emptyset$; $g = x < c$; $\lambda = \emptyset$. 

	\item $p_\psi \in \sigma$; $S'$ is the union of
some $S''$ such that $S \xrightarrow[\wedge]{\sigma} S''$ and $S'''$ with $\{s_0^\A\} \xrightarrow[\wedge]{\sigma} S'''$, 
$T \xrightarrow[\wedge]{\sigma} T'$, $T' \cap F^\A = \emptyset$; $g = x < c$; $\lambda = \emptyset$. 

	\item $p_\psi \in \sigma$; $\{s_0^\A\} \xrightarrow[\wedge]{\sigma} S'$, $T'$ is the union of some $T''$ such that $S \xrightarrow[\wedge]{\sigma} T''$
and $T'''$ with $T \xrightarrow[\wedge]{\sigma} T'''$, $T' \cap F^\A = \emptyset$; $g = x \geq c$; $\lambda = \{x \}$. 
	\end{itemize}
\end{itemize}
\begin{proposition}
$\sem{\C^\psi} = \sem{\globally \big(p_\psi \Rightarrow \tilde{\A}_{\geq c}(\overline{\phi_1}, \dots, \overline{\phi_n})\big)}$.
\end{proposition}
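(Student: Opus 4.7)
My plan is to follow the template of Proposition~\ref{prop:thirdcase}, with the level-merging step adapted to the upward-closed interval $[c, \infty)$ and augmented by the $T$-mechanism introduced for $\A_{\geq c}$. First, I would invoke Proposition~\ref{prop:emitl2ocata} to obtain the equivalent \ocata{} $\A_{\psi'}$ for $\psi' = \globally(p_\psi \Rightarrow \tilde{\A}_{\geq c}(\overline{\phi_1}, \dots, \overline{\phi_n}))$, and then argue the two inclusions separately.

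For the forward direction, given an accepting run $G = \langle V, \rightarrow \rangle$ of $\A_{\psi'}$ on $\rho = (\sigma_i, \tau_i)_{i \geq 1}$, I would process each level $L_i = \{(s, v) \mid (s, v, i) \in V\}$ in order and, for every $s^\A \in S^\A$, replace all subtrees rooted at vertices $(s^\A, v, i)$ with the subtree rooted at $(s^\A, v^\textit{max}, i)$ where $v^\textit{max} = \max\{v \mid (s^\A, v) \in L_i\}$. This is sound because $[c, \infty)$ is upward-closed: the largest clock enters the danger zone $x \geq c$ first, yielding the strongest ``never visit $F^\A$'' constraint and thereby subsuming the weaker copies (this is the dual of the min-merging used in Proposition~\ref{prop:thirdcase}). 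Translated into $\C^\psi$, this says that a fresh obligation should be placed either into an unused sub-component or, provided some sub-component has $x_j < c$, merged into its $S_j$; the clock overestimate incurred by such merging precisely matches the merged OCATA subtree.

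The hard part is dealing with obligations whose clock has already exceeded $c$: they remain permanently in the danger zone and must still be universally tracked forever, but they cannot be merged with a fresh obligation under a single clock without instantly over-constraining it. I would resolve this via the $T$-mechanism used for $\A_{\geq c}$: when $p_\psi$ is pulled in a sub-component with $x_j \geq c$, reset $x_j$ for the new obligation placed in $S'$ and absorb both the stale $S$ and any pre-existing $T$ into $T'$. The transitions on $T$ use the universal $\xrightarrow[\wedge]{\sigma}$ with guard $T' \cap F^\A = \emptyset$, which encodes exactly ``never reach $F^\A$'' without any remaining time constraint, matching the semantics required for stale obligations. The count $m + 1$ of sub-components is justified by a pigeon-hole argument: at most $m$ distinct $\A$-states can coexist in one generation, so some sub-component is always available (or can be freed by merging) when a new obligation arrives. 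The reverse direction is then a direct inspection: each $\C^\psi_j$ either strengthens its obligations through the merging in $S_j$ or preserves them in $T_j$, and since $\F = \emptyset$ every run of $\C^\psi$ is accepting, so all obligations triggered by $p_\psi$ on $\rho$ are met. The delicate verification will be confirming that the transition case $p_\psi \in \sigma$ with $g = x \geq c$ absorbs $S$ and $T$ into $T'$ without dropping any branch of the underlying OCATA run.
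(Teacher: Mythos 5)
Your proposal matches the paper's own argument in all essentials: it invokes Proposition~\ref{prop:emitl2ocata}, performs the level-by-level merge using $v^{\textit{max}}$ (justified by the upward-closedness of $[c,\infty)$, dual to the $\tilde{\A}_{\leq c}$ case), handles stale obligations past $c$ by absorbing $S$ and $T$ into $T'$ while resetting the clock for the fresh obligation, frees sub-components via the redundancy/pigeonhole argument, and closes the reverse direction by noting that each sub-component only ever tracks obligations at least as strong as the true ones. This is the same proof as the paper's sketch, worked out at the same level of detail.
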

\begin{proof}[Proof sketch.]
As in the proof of Proposition~\ref{prop:thirdcase}, whenever $p_\psi$ is pulled in $C^\psi$,
we attempt to put the corresponding obligation into an unused sub-component (with $S = \emptyset$)
or a sub-component that can be cleared (if $x > c$, we move the obligations in $S$ to $T$ and let them remain there).
If this is not possible, since for every $s^\A \in S^\A$ all the subtrees rooted at nodes
$(s^\A, v, i)$ can be replaced with the subtree rooted at 
$(s^\A, v^\textit{max}, i)$ where $v^\textit{max} = \max \{ v \mid (s^\A, v) \in L_i \}$,
some $C^\psi_j$ becomes redundant, and the obligations in the sub-component $C^\psi_k$ with the minimal non-negative
value of $x_k - x_j$ can be merged with the obligations in $C^\psi_j$, freeing up a sub-component for
the current incoming obligation.
This can be repeated to construct an accepting run of $\C^\psi$ on $\rho$.
The other direction holds as each $C^\psi_j$
adds an obligation that is at least as strong to $S_j$
without weakening the existing obligations in $S_j$.
\end{proof}

Finally, thanks to the fact that each location of $C^\psi$
can be represented using space polynomial in the size of $\A$,
and the product $\C^\textit{init}\times \prod_{\psi\in\Phi}\C^\psi$ need not to be constructed
explicitly, we can state the main result of this section.

\begin{theorem}\label{thm:pspace}
The satisfiability and model-checking problems for $\emitl_{0, \infty}$ over timed words
are $\mathrm{PSPACE}$-complete.
\end{theorem}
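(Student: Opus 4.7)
The plan is to establish the upper and lower bounds separately, leveraging the compositional construction developed above.

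For the \textbf{upper bound}, I would argue that the product $\C^\textit{init} \times \prod_{\psi \in \Phi} \C^\psi$ described in the preceding paragraphs admits a symbolic, on-the-fly emptiness check that runs in polynomial space. The key observations are: (i) each sub-component location $\langle \ell_j, S_j, T_j \rangle$ of $\C^\psi$ is encodable in space polynomial in $|\A|$, and $\C^\psi$ uses $O(m{+}1)$ clocks where $m = |S^\A|$; (ii) $|\Phi| = O(|\phi|)$, so a global state of the product (a tuple of sub-component states together with a clock valuation) can be written down in space polynomial in $|\phi|$; and (iii) the largest constant appearing in any guard is bounded by the largest constant in $\phi$. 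Consequently, a region of the product automaton is representable in polynomial space, and standard on-the-fly checking of (generalised) B\"uchi emptiness on the region graph can be carried out by an NLOGSPACE procedure in the size of the (exponential) region graph, i.e.\ in PSPACE in $|\phi|$. For model checking, I would observe that $\emitl_{0,\infty}$ is effectively closed under negation via the dual automata modalities $\tilde{\A}_I$, and then reduce $\sem{\A} \subseteq \sem{\phi}$ to the emptiness of $\A \times \C^\textit{init} \times \prod_{\psi \in \Phi'} \C^\psi$, where $\Phi'$ is the set of temporal subformulae of $\neg\phi$ in negative normal form; since $\A$ contributes only polynomially many additional clocks and locations, the same PSPACE bound applies.

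For the \textbf{lower bound}, I would inherit PSPACE-hardness directly from the satisfiability and model-checking problems for \ltl{} (equivalently, for $\mitl_{0,\infty}$ in the pointwise semantics), which are already PSPACE-hard and whose formulae are syntactically $\emitl_{0,\infty}$ formulae via the standard encoding of $\until$ into a two-location \nfa{} modality with trivial constraining interval $[0,\infty)$.

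The \textbf{main obstacle} I anticipate is the careful accounting needed to ensure that both the symbolic representation of regions and the enumeration of successors in the on-the-fly region-graph exploration can genuinely be performed in polynomial space when constants are encoded in binary. Since only lower- and upper-bound constraining intervals occur in $\emitl_{0,\infty}$, each clock's relevant threshold set is simply determined by the constants appearing in $\phi$, so the usual MITL-style region refinement suffices; nevertheless, one must verify that the non-deterministic merging of obligations (the choice of which sub-component $\C^\psi_j$ absorbs an incoming trigger) and the generalised B\"uchi fairness requirements (each $F_j$ must be visited infinitely often) can both be tracked with only a polynomial amount of auxiliary bookkeeping, which follows from the fact that the fairness index and the current region each take polynomial space.
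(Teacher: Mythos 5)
Your proposal is correct and follows essentially the same route as the paper, which justifies the theorem exactly by noting that each location of $\C^\psi$ has a polynomial-size symbolic representation, that the product $\C^\textit{init}\times \prod_{\psi\in\Phi}\C^\psi$ need not be built explicitly (so emptiness can be checked on-the-fly in polynomial space over the region graph), and that hardness is inherited from \ltl{}/$\mitl_{0,\infty}$. Your additional care about binary-encoded constants, closure under negation via the dual modalities $\tilde{\A}_I$ for model checking, and the generalised B\"uchi bookkeeping fills in details the paper leaves implicit but does not change the argument.
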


\begin{corollary}\label{cor:eeclpspace}
The satisfiability and model-checking problems for $\eecl$ over timed words
are $\mathrm{PSPACE}$-complete.
\end{corollary}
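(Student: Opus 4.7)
The plan is to obtain the upper bound by combining Theorem~\ref{thm:eeclemitl} with Theorem~\ref{thm:pspace}, and to obtain the matching lower bound by a standard reduction from $\ltl{}$ satisfiability/model checking. For the upper bound, given an \eecl{} formula $\phi$, I would first apply the translation of Theorem~\ref{thm:eeclemitl} to produce an equivalent $\emitl_{0,\infty}$ formula $\phi'$. The key point is that this translation is polynomial when formulae are represented as DAGs: each event-clock subformula $\oset{\triangleright}{\A}_I$ is rewritten using a constant number of occurrences of $\A_J$ (and the dual) with lower/upper-bound intervals $J$. I need to verify that the PSPACE decision procedure of Theorem~\ref{thm:pspace} still works when $\phi'$ is given in DAG form; inspecting the component construction in Section~\ref{sec:emitl2ocata}, each component automaton $\C^\psi$ depends only on the automaton $\A$ appearing in the top-level operator of $\psi$ and on the triggers of its immediate subformulae, so shared subformulae cause no duplication of work and the overall product is still explored on-the-fly in polynomial space.

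For the lower bound, I would observe that \eecl{} syntactically subsumes (untimed) \ltl{} with \nfa{} modalities: any \nfa{} modality $\A(\phi_1,\dots,\phi_n)$ can be expressed as $\oset{\triangleright}{\A}_{[0,\infty)}(\phi_1,\dots,\phi_n)$, and the classical $\ltl{}$ connectives are definable via simple finite automata. Since satisfiability (resp.\ model checking against a finite-state system) for $\ltl{}$ with \nfa{} modalities is already $\mathrm{PSPACE}$-hard~\cite{Sistla1985,Wolper1994}, and these reductions produce purely untimed specifications that live inside the \eecl{} fragment, $\mathrm{PSPACE}$-hardness transfers immediately to both \eecl{} problems.

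The one step I expect to need the most care is confirming that the Theorem~\ref{thm:eeclemitl} translation preserves polynomial size in DAG representation and that the resulting $\emitl_{0,\infty}$ formula $\phi'$ can be fed into the Theorem~\ref{thm:pspace} procedure without a tree-unfolding blow-up. Concretely, each clause of the \eecl{}-to-$\emitl_{0,\infty}$ translation (for $\A_{\leq c}$, $\A_{\geq c}$, and the mixed interval cases) introduces a bounded number of auxiliary \nfa{s} $\A^1,\B^{s,\phi},\C^s$ whose sizes are polynomial in $|\A|$, and each references the original subformulae through shared pointers; thus the set $\Phi$ of temporal subformulae of $\phi'$, which controls the number of components in the product $\C^{\textit{init}}\times\prod_{\psi\in\Phi}\C^\psi$, remains polynomial in $|\phi|$. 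Consequently the on-the-fly emptiness check runs in $\mathrm{PSPACE}$, and combining with the matching lower bound yields Corollary~\ref{cor:eeclpspace}.
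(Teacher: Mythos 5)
Your proposal is correct and follows exactly the argument the paper intends (and leaves implicit): the polynomial DAG-size translation of each $\oset{\triangleright}{\A}_I$ into a Boolean combination of one-sided $\A_J$ modalities from Theorem~\ref{thm:eeclemitl}, fed into the Theorem~\ref{thm:pspace} procedure whose component count depends only on the set of distinct temporal subformulae, plus $\mathrm{PSPACE}$-hardness inherited from \ltl{} via unconstrained modalities. Nothing is missing.
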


\section{Conclusion}

It is shown that $\emitl_{0, \infty}$ and \eecl{} are already as expressive as
$\emitl$ over timed words, a somewhat unexpected yet very pleasant result.
We also provided a compositional construction
from $\emitl_{0, \infty}$ to diagonal-free \ta{s}
based on one-clock alternating timed automata (\ocata{s});
this allows satisfiability and model checking 
based on existing algorithmic back ends for \ta{s}.
The natural next step would be to implement the construction
and evaluate its performance on real-world use cases.
Another possible future direction is to investigate whether
similar techniques can be used to handle full \emitl{} or
larger fragments of \ocata{s} (like~\cite{Ferrere18}).

\begin{acks}
The author would like to thank
Thomas Brihaye, Chih-Hong Cheng, Thomas Ferr\`{e}re, Gilles Geeraerts, Timothy M. Jones, Arthur Milchior, and Benjamin Monmege
for their help and fruitful discussions.
The author would also like to thank the anonymous reviewers for their comments. This work is supported by the \grantsponsor{gsidepsrc}{Engineering and Physical Sciences Research Council (EPSRC)}{https://epsrc.ukri.org/}
through grant \grantnum{gsidepsrc}{EP/P020011/1} and
(partially) by \grantsponsor{gsidfrsfnrs}{F.R.S.-FNRS}{http://www.fnrs.be/} PDR
grant \grantnum{gsidfrsfnrs}{SyVeRLo}.

\end{acks}

\bibliographystyle{ACM-Reference-Format}
\bibliography{refs} 

\appendix
\section{Proof of Proposition~\ref{prop:emitl2ocata}}

\begin{proof}
We first show that $\sem{\A_\phi} \subseteq \sem{\phi}$. Suppose that
$\A_\phi = \langle \Sigma, S, s_0, \atransitions, F \rangle$ has an accepting run $G = \langle V, \to \rangle$ on $\rho = (\sigma_1,\tau_1)(\sigma_2,\tau_2)\cdots$ and let 
$L_i = \{ (s, v) \mid (s, v, i) \in V \}$.
We claim that:
\begin{enumerate}

\item For each $s^\A \in S^\A$ where $\A$ occurs in a subformula $\A_I(\phi_1, \dots, \phi_n)$
and $i \geq 1$, $L_i \models_v s^\A$ implies that 
there is a \emph{finite} rooted DAG $G' = \langle V', \to' \rangle$ with $V' \subseteq S^\A \times \N_{\geq 0}$,
$(s^\A, i)$ as the root, and for each vertex $(s, k)$ either (a) $k > i$, $v + (\tau_k - \tau_i) \in I$, and $s \in F^\A$, or (b) there is some $a_{k+1} \in \Sigma^\A$
such that $L_{k+1} \models_{v + (\tau_{k+1} - \tau_i)} \atransitions(\phi_{a_{k+1}}, \sigma_{k+1})$
and there is a model $M$ of the
formula $\atransitions^\A(s, a_{k+1})$ such that $(s,k) \to' (s',k+1)$
\mbox{for every state $s'$ in $M$}.

\item For each $s^\A \in S^\A$ where $\A$ occurs in a subformula $\tilde{\A}_I(\phi_1, \dots, \phi_n)$
and $i \geq 1$, $L_i \models_v s^\A$ implies that 
there is a rooted DAG $G' = \langle V', \to' \rangle$ with $V' \subseteq S^\A \times \N_{\geq 0}$,
$(s^\A, i)$ as the root, and each vertex $(s, k)$ satisfies (a) if $k > i$ then $v + (\tau_k - \tau_i) \in I$ implies $s \notin F^\A$, and (b) for every $a_{k+1} \in \Sigma^\A$,
either $L_{k+1} \models_{v + (\tau_{k+1} - \tau_i)} \atransitions(\phi_{a_{k+1}}, \sigma_{k+1})$
or there is a model $M$ of the
formula $\overline{\atransitions^\A}(s, a_{k+1})$ such that $(s,k) \to' (s',k+1)$
for every state \mbox{$s'$ in $M$}.

\end{enumerate}

We prove (1); (2) can be proved similarly.
First note that every branch labelled with locations of $\A$ must terminate. 
If $(s^\A, v, i)$ is a leaf with respect to $S^\A$ in $G$, we must have $N \models_{v + (\tau_{i+1} - \tau_i)} \atransitions(\phi_a , \sigma_{i+1}) \land \atransitions^\A[s_F^\A \leftarrow s^\A_F \lor x \in I](s^\A, a)$ for some $a \in \Sigma^\A$ and $N \subseteq (S \setminus S^\A) \times \R_{\geq 0}$
such that $N \subseteq L_{i+1}$.
It follows that either (i) $N \models_{v + (\tau_{i+1} - \tau_i)} \atransitions(\phi_a , \sigma_{i+1}) \land \atransitions^\A(s^\A, a)$
or (ii) $v + (\tau_{i+1} - \tau_i) \in I$ and $N \cup M \models_{v + (\tau_{i+1} - \tau_i)} \atransitions(\phi_a , \sigma_{i+1}) \land \atransitions^\A(s^\A, a)$
for some nonempty $M \subseteq F^\A \times \{ v + (\tau_{i+1} - \tau_i) \}$.
In case (i), (b) trivially holds.
In case (ii) note that $N \models_{v + (\tau_{i+1} - \tau_i)} \atransitions(\phi_a , \sigma_{i+1})$, and let $\{ (s', i+1) \mid (s', v') \in M \}$ be the successors
of $(s^\A, i)$ in $G'$: each of them satisfies (a).
If $(s^\A, v, i)$ is not a leaf with respect to $S^\A$ in $G$, we have $N \cup M \models_{v + (\tau_{i+1} - \tau_i)} \atransitions(\phi_a , \sigma_{i+1}) \land \atransitions^\A[s_F^\A \leftarrow s^\A_F \lor x \in I](s^\A, a)$ for some $a \in \Sigma^\A$, $N \subseteq (S \setminus S^\A) \times \R_{\geq 0}$
such that $N \subseteq L_{i+1}$, and nonempty $M \subseteq S^\A \times \{ v + (\tau_{i+1} - \tau_i) \}$
such that $M \subseteq L_{i+1}$. It follows that either (i) $N \cup M \models_{v + (\tau_{i+1} - \tau_i)} \atransitions(\phi_a , \sigma_{i+1}) \land \atransitions^\A(s^\A, a)$
or (ii) $v + (\tau_{i+1} - \tau_i) \in I$ and $N \cup M' \models_{v + (\tau_{i+1} - \tau_i)} \atransitions(\phi_a , \sigma_{i+1}) \land \atransitions^\A(s^\A, a)$
for some $M' \supset M$ with $M' \setminus M \subseteq F^\A \times \{ v + (\tau_{i+1} - \tau_i) \}$.
In case (i), let $\{ (s', i+1) \mid (s', v') \in M \}$ be the successors
of $(s^\A, i)$ in $G'$; applying the IH on $L_{i+1} \models_{v + (\tau_{i+1} - \tau_i)} s'$
for all such $s'$ yields the corresponding sub-DAGs, which we combine to obtain $G'$.
Case (ii) is similar.

We now claim that for each subformula $\psi$ of $\phi$ and  $i \geq 1$,
$L_i  \models_0 \atransitions(\psi, \sigma_i)$ implies $\rho, i \models \psi$.
The cases of atomic propositions, negation, and Boolean operators
are trivial. For $\psi = \A_I(\phi_1, \dots, \phi_n)$, as $L_i \models_0 \atransitions(s_0^\A, \sigma_i)$
we have, for some $a \in \Sigma^\A$, (i) $L_i \models_0 \atransitions(\phi_a, \sigma_i)$
and (ii) $L_i \models_0 \atransitions^\A[s_F^\A \leftarrow s^\A_F \lor x \in I](s_0^\A, a)$.
From (i) and the IH we obtain $\rho, i \models \phi_a$.
From (ii) we deduce that there is $M \subseteq (S^\A \setminus F^\A) \times \{ 0 \}$
such that $M \subseteq L_i$, and $M' \supseteq M$ such that
$M' \setminus M \subseteq F^\A \times \{ 0 \}$ and $M' \models_0 \atransitions^\A(s_0^\A, a)$.
Applying (1) on every $s^\A \in M$ and the IH (note that by the definition of $\atransitions$,
we have $L_{k+1} \models_0 \atransitions(\phi_{a_{k+1}}, \sigma_{k+1})$ if $L_{k+1} \models_{v + (\tau_{k+1} - \tau_i)} \atransitions(\phi_{a_{k+1}}, \sigma_{k+1})$)
and combining sub-DAGs gives $\rho, i \models \psi$.
The case of $\psi = \tilde{\A}_I(\phi_1, \dots, \phi_n)$  is analogous.

Finally, as $s^\textit{init} \in L_0$, we have $L_1 \models_{\tau_1} x.\atransitions(\phi, \sigma_1)$,
which is equivalent to $L_1 \models_0 \atransitions(\phi, \sigma_1)$. It follows that $\rho, 1 \models \phi$; this finishes the proof of $\sem{\A_\phi} \subseteq \sem{\phi}$.

For the other direction, observe that $\A_{\neg \phi} = (\A_\phi)^c$ (up to renaming of locations)
where $(\A_\phi)^c$ is obtained from $\A_\phi$ by dualising the transition function and 
swapping the sets of final and non-final locations.
The desired result ($\sem{\phi} \subseteq \sem{\A_\phi}$)
immediately follows from the fact that $\A_\phi$ is a \emph{tree-like} \ocata{}~\cite{BriEst14}.
\end{proof}

%\section{Proof of Proposition~\ref{prop:componentforA}}

%certain choices made earlier in (ii) can be altered, eventually allowing the progress.

%\section{Proof of Theorem~\ref{thm:expeq}}
%To handle the case where there exists a maximal $j$, $i < j < k$ such that $\tau_k - \tau_j > 1$ and $\tau_j - \tau_i \in (c-1, c]$,
%we define a family of automata modalities $\{\B_\A^m \mid m \geq 1 \}$ such that
%each location of $\B_\A^m$ is of the form $\langle s, \ell \rangle$
%with $s^\A \in S^\A$ and $1 \leq \ell \leq 2m$. 
%Each transition from $\langle s, \ell \rangle$ to $\langle s', \ell' \rangle$ is associated with a subformula
%$\phi_a \wedge \phi$ such that
%$a \in \Sigma^\A$ and $s' = \atransitions^\A(s, a)$;
%for $\phi$, we have the following possibilities:
%\begin{itemize}
%\item $\ell \equiv 1 \Mod{m}$ and $\ell < 2m - 2$:
%	\begin{itemize}
%	\item $\ell' = \ell$ and $\phi = \top$;
%	\item $\ell' = \ell + 1$ and $\phi = \A''(s')_{(1, \infty)}$;
%	\end{itemize}
%\item $\ell \equiv 2 \Mod{m}$ and $\ell < 2m - 2$:
%\end{itemize}
%Finally, let the initial location be $\langle s_0^\A , 1 \rangle$ assume that all such transitions are present in $\B_\A^m$.

\end{document}